\let\csname equation*\endcsname\relax
\let\csname endequation*\endcsname\relax
\newcommand{\vm}[1]{\boldsymbol{#1}}
\renewcommand{\thepseudocode}{1}
\newtheorem{theorem}{Theorem}
\begin{document}

\title{Bayesian signal reconstruction for 1-bit compressed sensing}
\author{Yingying Xu$^{1}$, Yoshiyuki Kabashima$^{1}$ and Lenka Zdeborov\'{a}$^{2}$}

\address{$^{1}$Department of Computational Intelligence and Systems Science, 
\\Tokyo Institute of Technology, Yokohama 226-8502, Japan
\\$^{2}$Institut de Physique Th\'{e}orique, IPhT, CEA Saclay, and URA 2306, CNRS, 
\\F-91191 Gif-sur-Yvette, France}
\ead{yingxu@sp.dis.titech.ac.jp, kaba@dis.titech.ac.jp, {lenka.zdeborova@cea.fr}}

\begin{abstract}
The 1-bit compressed sensing framework enables the recovery of  a sparse vector $\vm{x}$ 
from the sign information of each entry of its linear transformation.
Discarding the amplitude information can significantly reduce the amount of data, which is highly beneficial in practical applications.
In this paper, we present a Bayesian approach to  signal reconstruction for 1-bit compressed sensing,  
and analyze its typical performance using statistical mechanics.
{As a basic setup,}
 {we consider the case that the measuring matrix $\vm{\Phi}$ has i.i.d entries, and the measurements $\vm{y}$ are noiseless.}
Utilizing the replica method, we show that the Bayesian approach 
enables better reconstruction than the $l_1$-norm minimization approach,  
asymptotically saturating the performance obtained when the non-zero entries positions of the signal are known, 
 {for signals whose non-zero entries follow zero mean Gaussian distributions.}
We also test a message passing algorithm for  signal reconstruction on the basis of 
belief propagation.
The results of numerical experiments are consistent with those of the theoretical analysis.
\end{abstract}

\maketitle

\section{Introduction}
\textit{
Compressed (or compressive) sensing} (CS) is currently one of the most popular topics in information science, 
and has been used for applications in various engineering fields, such as audio and visual electronics, medical imaging devices, and astronomical observations \cite{CShardware}.
Typically, smooth signals, such as natural images and communications signals, 
can be represented by a sparsity-inducing basis, such as a Fourier or wavelet basis \cite{Elad2010,StarckMurtaghFadili2010}. 
The goal of CS is to reconstruct a high-dimensional signal from its lower-dimensional linear transformation data, 
utilizing the prior knowledge on the sparsity of the signal \cite{CandesWakin2008}.
This results in time, cost, and precision advantages.

Mathematically, the CS problem can  be expressed as follows: an $N$-dimensional vector $\textrm{\boldmath $x^{0}$}$
is linearly transformed into an $M$-dimensional vector $\textrm{\boldmath $y$}$
by an $M \times N$-dimensional measurement matrix $\textrm{\boldmath $\Phi$}$, as $\textrm{\boldmath $y$}=\textrm{\boldmath $\Phi x^{0}$}$ \cite{CandesWakin2008}. 
The observer is free to choose the measurement protocol. Given $\textrm{\boldmath $\Phi$}$ and $\textrm{\boldmath $y$}$, 
the central problem is how to reconstruct $\textrm{\boldmath $x^{0}$}$.
When $M < N$, {due to the} loss of information, the inverse problem has infinitely many solutions. However, when it is guaranteed that $\textrm{\boldmath $x^{0}$}$ has only $K<M$ nonzero entries in some convenient basis (i.e., when the signal is sparse enough) and the measurement matrix is {incoherent} with that basis, there is a high probability that the inverse problem has a unique and exact solution. 
Considerable efforts have been made to clarify the condition for the uniqueness and correctness of the solution,  
and {to develop} practically feasible signal reconstruction algorithms \cite{Donoho2006, CandesRombergTao2006,KWT2009,Sompolinsky2010,Krzakala2012}. 

Recently, a scheme called {\em 1-bit compressed sensing (1-bit CS)} was proposed. In 1-bit CS, the signal is recovered from only the sign data of the linear measurements 
$\textrm{\boldmath $y$}=\mathrm{sign}\left(\textrm{\boldmath $\Phi x^{0}$}\right)$, where $\mathrm{sign}(x)=x/|x|$ for $x \ne 0$ is a component-wise operation when $x$ is a vector \cite{1bitCS}. 
Discarding the amplitude information can significantly reduce the amount of data to be stored and/or transmitted. 
This is highly advantageous for most real-world applications, particularly those in which the measurement is accompanied by the transmission of digital information \cite{Lee2012}.
In 1-bit CS, the amplitude information is lost during the measurement stage, making perfect recovery of the original signal impossible. 
Thus, we generally need more measurements to compensate for the loss of information.
The scheme is considered to have practical relevance in situations where perfect recovery is not required, and measurements are inexpensive but precise quantization is expensive.
These features are very different from those of general CS.

The most widely used signal reconstruction scheme in CS is $l_1$-norm minimization, which searches for the vector  with the smallest $l_1$-norm $||\textrm{\boldmath $x$}||_1=\sum_{i=1}^N |x_i|$ under the constraint $\vm{y}=\vm{\Phi x}$. 
This is based on the work of Cand{\`{e}}s et al.\ \cite{CandesWakin2008}--\cite{CandesRombergTao2006}, who also suggested the use of a random measurement 
matrix $\vm{\Phi}$ with independent and identically distributed 
entries. 
Because the optimization problem is convex and can be solved using efficient linear programming techniques, 
these ideas have led to various fast and efficient algorithms. The $l_1$-reconstruction is now widely used, and is 
responsible for the surge {of} interest in CS over the past few years.
Against this background, $l_1$-reconstruction was the first technique attempted in the development of the 1-bit CS problem. 
In \cite{1bitCS}, an approximate signal recovery algorithm was proposed based on the minimization of the $l_1$-norm 
under the constraint $\mathrm{sign}\left(\textrm{\boldmath $\Phi x$}\right)=\textrm{\boldmath $y$}$, and its utility was demonstrated by numerical experiments.
In \cite{YingKaba1bitCS2013}, the capabilities of this method were analyzed, and a new algorithm based on the cavity method was presented.
However, the significance of the $l_1$-based scheme may be rather weak for 1-bit CS, because 
the loss of convexity prevents  
the development of mathematically {guaranteed} and practically feasible algorithms.

Therefore, we propose 
another approach based on Bayesian inference for 1-bit CS, 
 {focused on the case that each entry of $\vm{\Phi}$ is independently generated from a standard Gaussian distribution, and the output $\vm{y}$ is noisless}.
Although the Bayesian approach is guaranteed to achieve {the} optimal performance when the actual signal distribution is given, 
quantifying the performance gain is a nontrivial task. 
We accomplish this task {utilizing} the replica method, 
which shows that 
 {when non-zero entries of the signal follow zero mean Gaussian distributions,}
the Bayesian optimal inference 
{asymptotically 
saturates the mean squared error (MSE) performance obtained when the positions of non-zero signal entries 
are known
as $\alpha=M/N \to \infty$.} 
This means that, 
 {in such cases,} 
at least in terms of MSEs, the correct prior knowledge of the sparsity asymptotically becomes 
as informative as the knowledge of the exact positions of the non-zero entries.  
Unfortunately, performing the exact Bayesian inference is computationally difficult. 
This difficulty is resolved by employing the generalized approximate message passing technique, 
which is regarded as a variation of belief propagation or the cavity method {\cite{GAMP, kabaUdaLPTN}}. 


This paper is organized as follows. The next section sets up the 1-bit CS problem. 
In section 3, we  examine the signal recovery performance achieved by the Bayesian scheme {utilizing} the replica method. 
In section 4, an approximate signal recovery algorithm based on belief propagation is developed.
The utility of this algorithm is tested and its asymptotic performance is analyzed in section 5. The final section summarizes our work.

\section{Problem setup and Bayesian optimality}
Let us suppose that entry $x_i^0$ $(i=1,2,\ldots,N)$ of an $N$-dimensional signal (vector) $\vm{x}^0 =(x_i^0)\in \mathbb{R}^N$ is independently generated from an identical sparse distribution:
\begin{equation}
P\left(x\right)=\left(1-\rho\right) \delta \left( x \right)
+ \rho \tilde{P} \left( x \right), 
\label{sparse}
\end{equation}
where $\rho \in [0,1]$ represents the density of nonzero entries in the signal, and $\tilde{P} (x)$ is a distribution function of $x \in \mathbb{R}$ 
that has a finite second moment and does not have finite mass at $x=0$. 
In 1-bit CS, the measurement is performed as 
\begin{equation}
\textrm{\boldmath $y$}=\mathrm{sign} \left( \textrm{\boldmath $\Phi x^{0}$} \right), 
\label{measurement}
\end{equation}
where $\mathrm{sign}(x )=x/|x|$ operates in a component-wise manner, and   
for simplicity we assume that each entry of the $M \times N$ measurement matrix $\vm{\Phi}$ 
is provided as a sample of a Gaussian distribution of zero mean and variance $N^{-1}$. 

We shall adopt the Bayesian approach to reconstruct the signal from the 1-bit measurement $\textrm{\boldmath $y$}$
assuming that $\vm{\Phi}$ is correctly known in the recovery stage. 
Let us denote an arbitrary recovery scheme for the measurement $\vm{y}$ as
$\hat{\vm{x}}(\vm{y})$, where we impose a normalization constraint $|\hat{\vm{x}}(\vm{y})|^2=N\rho$
to compensate for the loss of amplitude information by the 1-bit measurement.
Equations (\ref{sparse}) and (\ref{measurement}) indicate that, for a given $\vm{\Phi}$, 
the joint distribution of the sparse vector and its 1-bit measurement is 
\begin{eqnarray}
P(\textrm{\boldmath $x$}, \textrm{\boldmath $y$}|\vm{\Phi})=\prod_{\mu=1}^M \Theta\left ( y_\mu  (\vm{\Phi} \vm{x})_\mu \right )
\times \prod_{i=1}^N \left (\left(1-\rho\right) \delta \left( x_i \right)
+ \rho \tilde{P} \left( x_i \right) \right ), 
\label{joint}
\end{eqnarray}
where $\Theta(x)=1$ for $x > 0$, and vanishes otherwise. 
This generally provides $\hat{\vm{x}}(\cdot)$ with 
the mean square error, which is hereafter handled as the performance measure
for the signal reconstruction\footnote{ {Errors of other types, such as 
$l_p$-norm, can also be chosen as the performance measure. The argument shown in 
this section holds similarly even when such measures are used.}},  as follows:
\begin{eqnarray}
{\rm MSE}(\hat{\vm{x}}(\cdot))=
\sum_{\vm{y}} \int d \vm{x} P(\textrm{\boldmath $x$}, \textrm{\boldmath $y$}|\vm{\Phi})
\left |\frac{\hat{\vm{x}}(\vm{y})}{|\hat{\vm{x}}(\vm{y})|}-\frac{\vm{x}}{|\vm{x}|}\right |^2. 
\label{MSEdef}
\end{eqnarray}
The following theorem forms the basis of our Bayesian approach. 
\begin{theorem}
\label{theorem1}
${\rm MSE}(\hat{\vm{x}}(\cdot))$ is lower bounded as
\begin{eqnarray}
{\rm MSE}(\hat{\vm{x}}(\cdot)) \ge 2 \sum_{\vm{y}} P(\vm{y}|\vm{\Phi}) 
\left (1- \left |\left \langle \frac{\vm{x}}{|\vm{x}|} \right \rangle_{|\vm{y},\vm{\Phi}} \right | \right ), 
\label{MSE theorem}
\end{eqnarray}
where  
\begin{eqnarray}
P(\vm{y}|\vm{\Phi}) &=& \int d\vm{x} P(\vm{x}, \vm{y}|\vm{\Phi}) \cr
&=& \int d\vm{x} \prod_{\mu=1}^M \Theta\left ( y_\mu  (\vm{\Phi} \vm{x})_\mu \right )
\times \prod_{i=1}^N \left (\left(1-\rho\right) \delta \left( x_i \right)
+ \rho \tilde{P} \left( x_i \right) \right )
\label{marginal_y}
\end{eqnarray}
is the marginal distribution of the 1-bit measurement $\vm{y}$ and  
$\left \langle f(\vm{x}) \right \rangle_{|\vm{y},\vm{\Phi}}=\int d\vm{x} f(\vm{x}) P(\vm{x}|\vm{y},\vm{\Phi})
=\int d\vm{x} f(\vm{x}) P(\vm{x},\vm{y}|\vm{\Phi})/P(\vm{y}|\vm{\Phi})$ generally denotes the posterior mean of 
an arbitrary function of $\vm{x}$, $f(\vm{x})$, given $\vm{y}$.
The equality holds for the Bayesian optimal signal reconstruction 
\begin{eqnarray}
\hat{\vm{x} }^{\rm Bayes}(\vm{y})=\sqrt{N\rho}\left |\left \langle \frac{\vm{x}}{|\vm{x}|} \right \rangle_{|\vm{y},\vm{\Phi}} \right |^{-1}
\left \langle \frac{\vm{x}}{|\vm{x}|} \right \rangle_{|\vm{y},\vm{\Phi}}.
\label{Bayesian_reconstruction}
\end{eqnarray}
\end{theorem}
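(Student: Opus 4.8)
The plan is to exploit the fact that the error functional in (\ref{MSEdef}) compares two \emph{unit} vectors, so that the estimator enters only through its normalized direction and its amplitude is irrelevant to the error. Writing $\hat{\vm{u}}(\vm{y})=\hat{\vm{x}}(\vm{y})/|\hat{\vm{x}}(\vm{y})|$, which satisfies $|\hat{\vm{u}}(\vm{y})|=1$, and noting that $\vm{x}/|\vm{x}|$ is also a unit vector, I would first expand the integrand of (\ref{MSEdef}) as
\begin{equation}
\left|\hat{\vm{u}}(\vm{y})-\frac{\vm{x}}{|\vm{x}|}\right|^2 = 2 - 2\,\hat{\vm{u}}(\vm{y})\cdot\frac{\vm{x}}{|\vm{x}|}.
\end{equation}
This identity is the crux of the argument: the amplitude of $\hat{\vm{x}}(\vm{y})$ drops out entirely, leaving it free to be fixed by the normalization constraint $|\hat{\vm{x}}(\vm{y})|^2=N\rho$, while only the direction must be optimized.

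Next I would substitute this expansion into (\ref{MSEdef}) and treat the two terms separately. The constant contribution integrates to $2$ because $P(\vm{x},\vm{y}|\vm{\Phi})$ is a normalized joint distribution, $\sum_{\vm{y}}\int d\vm{x}\,P(\vm{x},\vm{y}|\vm{\Phi})=1$. For the cross term, since $\hat{\vm{u}}(\vm{y})$ depends only on $\vm{y}$, I would factor $P(\vm{x},\vm{y}|\vm{\Phi})=P(\vm{x}|\vm{y},\vm{\Phi})P(\vm{y}|\vm{\Phi})$, pull $\hat{\vm{u}}(\vm{y})$ out of the $\vm{x}$-integral, and recognize what remains as the posterior mean $\langle \vm{x}/|\vm{x}|\rangle_{|\vm{y},\vm{\Phi}}$ defined in the statement. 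This produces the exact identity
\begin{equation}
{\rm MSE}(\hat{\vm{x}}(\cdot))=2\sum_{\vm{y}}P(\vm{y}|\vm{\Phi})\left(1-\hat{\vm{u}}(\vm{y})\cdot\left\langle\frac{\vm{x}}{|\vm{x}|}\right\rangle_{|\vm{y},\vm{\Phi}}\right),
\end{equation}
which holds for \emph{any} admissible estimator.

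To obtain the bound I would then apply the Cauchy--Schwarz inequality term by term in $\vm{y}$. Because $|\hat{\vm{u}}(\vm{y})|=1$, one has $\hat{\vm{u}}(\vm{y})\cdot\langle \vm{x}/|\vm{x}|\rangle_{|\vm{y},\vm{\Phi}}\le\bigl|\langle \vm{x}/|\vm{x}|\rangle_{|\vm{y},\vm{\Phi}}\bigr|$, and since $P(\vm{y}|\vm{\Phi})\ge 0$ the lower bound (\ref{MSE theorem}) follows at once. Equality in Cauchy--Schwarz is attained precisely when $\hat{\vm{u}}(\vm{y})$ is the unit vector aligned with the posterior mean, i.e.\ $\hat{\vm{u}}(\vm{y})=\langle \vm{x}/|\vm{x}|\rangle_{|\vm{y},\vm{\Phi}}\big/\bigl|\langle \vm{x}/|\vm{x}|\rangle_{|\vm{y},\vm{\Phi}}\bigr|$. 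The last step is a short verification that the reconstruction (\ref{Bayesian_reconstruction}) realizes exactly this direction while simultaneously satisfying $|\hat{\vm{x}}(\vm{y})|^2=N\rho$; computing its norm gives $\sqrt{N\rho}$, and normalizing it returns the required aligned direction, so the bound is saturated.

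I do not anticipate a serious obstacle, since the argument reduces to Cauchy--Schwarz after an elementary rearrangement. The only point requiring care is the well-definedness of the optimal direction: the prescription (\ref{Bayesian_reconstruction}) is meaningful only where $\langle \vm{x}/|\vm{x}|\rangle_{|\vm{y},\vm{\Phi}}\neq\vm{0}$. If this posterior mean vanishes for some measurement outcome $\vm{y}$, the corresponding summand already equals $2P(\vm{y}|\vm{\Phi})$ for every admissible direction, so the estimator may be assigned arbitrarily there without compromising optimality, and the equality in (\ref{MSE theorem}) is preserved.
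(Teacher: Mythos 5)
Your proposal is correct and follows essentially the same route as the paper: expand the squared difference of the two unit vectors, use the Bayes factorization to produce the posterior mean $\left\langle \vm{x}/|\vm{x}|\right\rangle_{|\vm{y},\vm{\Phi}}$, and apply the Cauchy--Schwarz inequality, with equality when the estimator is aligned with that posterior mean and normalized to $\sqrt{N\rho}$. Your extra remark on the degenerate case $\left\langle \vm{x}/|\vm{x}|\right\rangle_{|\vm{y},\vm{\Phi}}=\vm{0}$ is a small but welcome addition the paper omits.
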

\begin{proof}
Employing the Bayes formula $P(\vm{x},\vm{y}|\vm{\Phi})=P(\vm{x}|\vm{y},\vm{\Phi})P(\vm{y}|\vm{\Phi})$ in
(\ref{MSEdef}) yields the expression 
\begin{eqnarray}
{\rm MSE}(\hat{\vm{x}}(\cdot))
&=&\sum_{\vm{y}} \int d \vm{x} P(\textrm{\boldmath $x$}, \textrm{\boldmath $y$} {|\vm{\Phi}})
\left |\frac{\hat{\vm{x}}(\vm{y})}{|\hat{\vm{x}}(\vm{y})|}-\frac{\vm{x}}{|\vm{x}|}\right |^2\cr
&=&\sum_{\vm{y}} \int d \vm{x} P(\textrm{\boldmath $x$}| \textrm{\boldmath $y$},\vm{\Phi})P(\textrm{\boldmath $y$}|\vm{\Phi})
\left( \left| \frac{\hat{\vm{x}}(\vm{y})}{|\hat{\vm{x}}(\vm{y})|}\right |^2 + \left | \frac{\vm{x}}{|\vm{x}|}\right |^2 - 2 \frac{\hat{\vm{x}}(\vm{y})\cdot\vm{x}}{|\hat{\vm{x}}(\vm{y})||\vm{x}|}\right) \cr
&=& 2\sum_{\vm{y}}P(\textrm{\boldmath $y$}|\vm{\Phi})\left(1-\frac{\hat{\vm{x}}(\vm{y})}{|\hat{\vm{x}}(\vm{y})|} \cdot \left \langle \frac{\vm{x}}{|\vm{x}|} \right \rangle_{|\vm{y},\vm{\Phi}} \right).
\label{MSE2}
\end{eqnarray}
Inserting the Cauchy--Schwarz inequality
\begin{eqnarray}
{\hat{\vm{x}}(\vm{y}) \cdot \left \langle \frac{\vm{x}}{|\vm{x}|} \right \rangle_{|\vm{y},\vm{\Phi}}}
\le 
{\left |\hat{\vm{x}}(\vm{y})\right | \left | \left \langle \frac{\vm{x}}{|\vm{x}|} \right \rangle_{|\vm{y},\vm{\Phi}} \right |} 
\end{eqnarray}
into the right-hand side of (\ref{MSE2}) yields the lower bound of (\ref{MSE theorem}), where 
the equality holds when $\hat{\vm{x}}(\vm{y}) $ is parallel to $\left \langle \frac{\vm{x}}{|\vm{x}|} \right \rangle_{|\vm{y},\vm{\Phi}}$. 
This, in conjunction with the normalization constraint of $\hat{\vm{x}}(\vm{y}) $, leads to 
(\ref{Bayesian_reconstruction}). 
\end{proof}

The above theorem guarantees that the Bayesian approach achieves the best possible performance 
in terms of MSE. Therefore, we hereafter focus on the reconstruction scheme of (\ref{Bayesian_reconstruction}), 
quantitatively evaluate its performance, and develop a computationally feasible approximate algorithm.

\section{Performance assessment by the replica method}
In statistical mechanics,  the macroscopic behavior of the system is generally analyzed by evaluating the partition function
or its negative logarithm, free energy. 
In our signal reconstruction problem, 
the marginal likelihood $P(\vm{y}|\vm{\Phi})$ of (\ref{marginal_y}) plays the role of the partition function. 
However, this still depends on the quenched random variables $\vm{y}$ and $\vm{\Phi}$. 
Therefore, we must further average the free energy as  
$\bar{f} \equiv -N^{-1} \left [\log P(\vm{y}|\vm{\Phi}) \right]_{\vm{y},\vm{\Phi}}$ 
to evaluate the typical performance, where
$\left [\cdots \right]_{\vm{y},\vm{\Phi}}$ denotes the configurational average 
concerning $\textrm{\boldmath $y$}$ and $\textrm{\boldmath $\Phi$}$.

Unfortunately, directly averaging the logarithm of random variables is, in general, technically difficult. 
Thus, we resort to the replica method to practically resolve this difficulty \cite{replica}. 
For this, we first evaluate the $n$-th moment of the marginal likelihood $\left [P^n \left(\vm{y} |\vm{\Phi} \right) \right]_{\vm{\Phi},\vm{y}}$ for $n=1,2,\ldots \in \mathbb{N}$ using the formula 
\begin{eqnarray}
&&P^n \left( \textrm{\boldmath $y$} |\vm{\Phi}\right)
=\int \prod_{a=1}^n \left (\textrm{d} \textrm{\boldmath $x$}^a P\left( \textrm{\boldmath $x^a$}\right)\right )
\prod_{a=1}^n \prod_{\mu=1}^M \Theta\left ( (\vm{y})_\mu  (\vm{\Phi} \vm{x}^a)_\mu \right ), 
\label{eq:expansion}
\end{eqnarray}
which holds only for $n=1,2,\ldots \in \mathbb{N}$. 
Here, $\vm{x}^a$ ($a=1,2,\ldots,n$) denotes  the $a$-th replicated signal. 
Averaging (\ref{eq:expansion}) with respect to $\vm{\Phi}$ and $\vm{y}$ results in {the} saddle-point evaluation concerning the macroscopic variables 
$q_{0a} =q_{a0}\equiv N^{-1} \vm{x}^0 \cdot \vm{x}^a$ and $q_{ab}=q_{ba} \equiv N^{-1} \vm{x}^a \cdot \vm{x}^b$ ($a,b =1,2,\ldots,n$). 

Although (\ref{eq:expansion}) holds only for $n \in \mathbb{N}$, the expression $N^{-1} \log \left [P^n \left(\vm{y} |\vm{\Phi}\right) \right]_{\vm{\Phi},\vm{y}}$ 
obtained by the saddle-point evaluation under a certain assumption concerning the permutation symmetry with respect to the replica indices $a,b=1,2,\ldots n$ is 
obtained as an analytic function of $n$, which is likely to also hold for $n \in \mathbb{R}$. Therefore, we next utilize the analytic function to evaluate 
the average of the logarithm of the partition function as 
\begin{eqnarray}
\bar{f}=-\lim_{n \to 0} (\partial /\partial n )N^{-1} \log \left [P^n \left(\vm{y} |\vm{\Phi}\right) \right]_{\vm{y},\vm{\Phi}}. 
\label{free energy density}
\end{eqnarray}
In particular, under the replica symmetric ansatz, where the dominant saddle-point is assumed to be of the form  
\begin{eqnarray}
q_{ab}=q_{ba}=\left \{
\begin{array}{ll}
\rho & (a=b=0) \cr
m & (a=1,2,\ldots,n; \ b=0) \cr
Q & (a=b=1,2,\ldots,n) \cr
q & (a\ne b =1,2,\ldots,n) 
\end{array}
\right . ,  
\label{RSanzats}
\end{eqnarray}
The above procedure expresses the average free energy density as 
\begin{eqnarray}
\bar{f} 
  &=& -\mathop{\rm extr}_{\omega} \!\Biggr\{ \int \textrm{d}x^0 P\left( x^0\right) \int\textrm{D}z\phi \left(\sqrt{\hat{q}}z+\hat{m}x^{0};\hat{Q}\right)
+\frac{1}{2}Q\hat{Q}+\frac{1}{2}q\hat{q}-m\hat{m} \nonumber\\
&&+2\alpha\int \textrm{D}t H\left(\frac{m}{\sqrt{\rho q-m^2}}t \right )\log H\left(\sqrt{\frac{q}{Q-q}}t\right) 
\Biggr\} .
\label{eq:free energy}
\end{eqnarray}
Here, $\alpha=M/N$,  {$H(x)=\int_x^{+\infty} {\rm D}z$, $\textrm{D}z=\textrm{d}z 
\exp \left (-z^2/2 \right )/\sqrt{2\pi}$ is a Gaussian measure,} $\textrm{extr}_{X}\{g(X)\}$ denotes the extremization of a function $g(X)$ with respect to $X$, 
$\omega = \{Q, q, m, \hat{Q}, \hat{q}, \hat{m}\}$, and
\begin{eqnarray}
\phi \left(\sqrt{\hat{q}}z+\hat{m}x^{0};\hat{Q}\right)\nonumber\\
=\log{\Bigg\{\int \textrm{d}x P\left( x\right)
  \exp{\left(-\frac{\hat{Q}+\hat{q}}{2}x^2+(\sqrt{\hat{q}}z+\hat{m}x^{0})x\right)}\Bigg\}}.
\end{eqnarray}
The derivation of $(\ref{eq:free energy})$ is provided in \ref{replicaderivation}.

In evaluating the right-hand side of (\ref{free energy density}), 
$P(\vm{y}|\vm{\Phi})$ not only gives the marginal likelihood (the partition function), but also 
the conditional density of $\vm{y}$ for taking the configurational average.
This accordance between the partition function and the distribution of the quenched random variables 
is generally known as the Nishimori condition in spin glass 
theory \cite{Nishimori}, 
 {for which the replica symmetric ansatz (\ref{RSanzats})
is supported by other schemes than the replica method \cite{RS_NL1, RS_NL2}, yielding the identity 
$\left [P^n \left(\vm{y} |\vm{\Phi}\right) \right]_{\vm{y},\vm{\Phi}}=\int {\rm d}\vm{\Phi} P(\vm{\Phi})
\left (\sum_{\vm{y}} P^{n+1}(\vm{y}|\vm{\Phi}) \right )$.
This} indicates that the true signal, $\vm{x}^0$, can be handled on an equal footing with 
the other $n$ replicated signals $\vm{x}^1,\vm{x}^2,\ldots,\vm{x}^n$ in the replica computation. 
As $n\to 0$, this higher replica symmetry among the $n+1$ replicated variables allows us to further simplify 
the replica symmetric ansatz (\ref{RSanzats}) by imposing four extra constraints:
$Q=\rho$, $q=m$, $\hat{Q}=0$, and $\hat{q}=\hat{m}$. 
As a consequence, the extremization condition of (\ref{eq:free energy}) is summarized by the non-linear equations
\begin{eqnarray}
m&=& 
\int {\rm D} t \frac{\left (\int {\rm d} x  xe^{-\frac{\hat{m}}{2}x^2+\sqrt{\hat{m}}t x} P(x) \right )^2}
{\int {\rm d} x  e^{-\frac{\hat{m}}{2}x^2+\sqrt{\hat{m}}t x} P(x)}\label{m_result}\\
\hat{m}&=&\frac{\alpha}{\pi\sqrt{2\pi}\left(\rho-m\right)}\int \textrm{d}t \frac{\exp{\left\{  -\frac{\rho+m}{2\left(\rho-m\right)}t^2\right\}}}{H\left(\sqrt{\frac{m}{\rho-m}}t\right)}.\label{mhat_result} 
\end{eqnarray} 

In physical terms, the value of $m$ determined by these equations is the typical overlap $N^{-1} \left [\vm{x}^0 \cdot \left \langle \vm{x} \right \rangle_{|\vm{y},\vm{\Phi}}\right]_{\vm{y},\vm{\Phi}} $ between the original signal $\vm{x}^0$ and 
the posterior mean $\left \langle \vm{x} \right \rangle_{|\vm{y},\vm{\Phi}}$. 
The law of large numbers and the self-averaging property guarantee that both $N^{-1} | \vm{x} |^2$ and $N^{-1}|\vm{x}^0|^2$
converge to $\rho$ with a probability of unity for typical samples.   
This indicates that the typical value of the direction cosine between $\vm{x}^0$ and $\hat{\vm{x}}^{\rm Bayes}(\vm{y})$
can be evaluated as 
$\left [ (\vm{x}^0 \cdot \hat{\vm{x}}^{\rm Bayes}(\vm{y}))/(|\vm{x}^0||\hat{\vm{x}}^{\rm Bayes}(\vm{y})| )\right ]_{\vm{y},\vm{\Phi}}
\simeq \left [ (\vm{x}^0 \cdot \left \langle \vm{x} \right \rangle_{|\vm{y},\vm{\Phi}} )\right ]_{\vm{y},\vm{\Phi}}/
\left (\left [ \left |\vm{x}^0 \right |\right ]_{\vm{x}^0} \left  [ \left |\left \langle \vm{x} \right \rangle_{|\vm{y},\vm{\Phi}} \right |\right ]_{\vm{y},\vm{\Phi}} \right )
=N m /(N\sqrt{\rho m})=\sqrt{m/\rho}$. 
Therefore, the MSE in (\ref{MSEdef}) can be expressed using $m$ and $\rho$ as
\begin{eqnarray}
{\rm MSE}
=2\left (1-\sqrt{\frac{m}{\rho}} \right ).
\label{MSE}
\end{eqnarray}

The symmetry between $\vm{x}^0$ and the other replicated variables $\vm{x}^a$ $(a=1,2\ldots,n)$ 
provides $\bar{f}$ with further information-theoretic meanings. 
Inserting $P(\vm{y},\vm{\Phi})=P(\vm{y}|\vm{\Phi})P(\vm{\Phi})$ into the definition of $\bar{f}$ gives 
$\bar{f}=N^{-1}\int {\rm d}\vm{\Phi} P(\vm{\Phi}) \left (-\sum_{\vm{y}}P(\vm{y}|\vm{\Phi})\log P(\vm{y}|\vm{\Phi}) \right )$, 
which indicates that $\bar{f}$ accords with the  entropy density of $\vm{y}$ for typical measurement matrices $\vm{\Phi}$. 
The expression $P(\vm{y} |\vm{x},\vm{\Phi})=\prod_{\mu=1}^M \Theta\left (y_\mu (\vm{\Phi}\vm{x})_\mu \right ) \in \{0,1\}$ guarantees
that the conditional entropy of $\vm{y}$ given $\vm{x}$ and $\vm{\Phi}$, 
$-\sum_{\vm{y}} P(\vm{y}|\vm{x},\vm{\Phi}) \log P(\vm{y}|\vm{x},\vm{\Phi}) $, always vanishes. 
These indicate that $\bar{f}$ also implies a mutual information density between $\vm{y}$ and $\vm{x}$. This physically quantifies the optimal information gain (per entry) of $\vm{x}$ that can be extracted from 
the 1-bit measurement $\vm{y}$ for typical $\vm{\Phi}$.

\section{
Bayesian optimal signal reconstruction by GAMP}
\label{BP}
Equation (\ref{MSE}) represents the potential performance of the Bayesian optimal signal reconstruction of 
1-bit CS. 
However, in practice, exploiting this performance is a non-trivial task, because performing 
the exact Bayesian reconstruction (\ref{Bayesian_reconstruction}) is computationally difficult. 
To resolve this difficulty, we now develop an approximate reconstruction algorithm 
following the framework of belief propagation (BP). 
Actually, BP has been 
successfully employed for standard CS problems 
{with} linear measurements, showing excellent performance in terms of both reconstruction accuracy and computational efficiency \cite{AMP}. 
To incorporate the non-linearity of the 1-bit measurement, we employ a variant of BP
known as generalized approximate message passing (GAMP) \cite{GAMP}, 
which can also be regarded as 
an approximate Bayesian inference algorithm for perceptron-type networks \cite{kabaUdaLPTN}.  

In general, the canonical BP equations for the probability measure $P(\vm{x}|\vm{\Phi},\vm{y})$ are expressed in terms of $2MN$ messages, 
$m_{i\rightarrow \mu}\left(x_i\right)$ and $m_{\mu\rightarrow i}\left(x_i\right) (i=1,2,\cdots,N; \mu=1,2,\cdots,M)$, 
which represent probability distribution functions that carry posterior information and output measurement information, respectively. 
They can be written as
\begin{eqnarray}
m_{\mu\rightarrow i}\left(x_i\right)
=\frac{1}{Z_{\mu\rightarrow i}}\int\prod_{j \neq i}\textrm{d} x_j P\left(y_{\mu}|u_{\mu}\right) \prod_{j \neq i} m_{j\rightarrow \mu}\left(x_j\right) \label{m_mu_i}\\
m_{i\rightarrow \mu}\left(x_i\right)
=\frac{1}{Z_{i\rightarrow \mu}} P\left(x_i\right)\prod_{\gamma \neq \mu}m_{\gamma\rightarrow i}\left(x_i\right)\label{m_i_mu}
\end{eqnarray}
Here, $Z_{\mu\rightarrow i}$ and $Z_{i\rightarrow \mu}$ are normalization factors ensuring that 
$\int\textrm{d} x_i m_{\mu\rightarrow i}(x_{i})=\int\textrm{d} x_i m_{i\rightarrow \mu}(x_{i})=1$, and 
we also define $u_{\mu}\equiv\left(\vm{\Phi}\vm{x}\right)_{\mu}$. 
Using (\ref{m_mu_i}), the {approximation of} marginal distributions $P(x_i|\vm{\Phi},\vm{y})=
\int \prod_{j\ne i} {\rm d}x_j P(\vm{x}|\vm{\Phi},\vm{y})$, which 
are often termed beliefs, are evaluated as
\begin{eqnarray}
m_{i}\left (x_i \right )
=\frac{1}{Z_i} P(x_i)\prod_{\mu=1}^M m_{\mu \rightarrow i} \left(x_i\right), \label{m_i}
\end{eqnarray}
where $Z_i$ is a normalization factor for $\int {\rm d} x_i m_i\left (x_i \right )=1$. 
To simplify the notation, we hereafter convert all  measurement results to $+1$ by multiplying  each row of the measurement matrix $\vm{\Phi}=(\Phi_{\mu i})$ by $y_{\mu}$ $(\mu=1,2,\ldots, N)$, giving $(\Phi_{\mu i}) \to (y_\mu \Phi_{\mu i})$, and  denote the resultant matrix as $\vm{\Phi}=(\Phi_{\mu i})$. In the new notation, $P\left(y_{\mu}|u_{\mu}\right)=\Theta\left( u_\mu\right)$.

Next, we introduce means and variances of $x_i$ in the posterior information message distributions as 
\begin{eqnarray}
a_{i\rightarrow\mu}\equiv \int\textrm{d}x_{i}x_{i}m_{i\rightarrow \mu}\left(x_i\right) \label{a_imu}\\
\nu_{i\rightarrow\mu}\equiv \int\textrm{d}x_{i}x_{i}^{2}m_{i\rightarrow \mu}\left(x_i\right)-a_{i\rightarrow\mu}^2 \label{nu_imu}.
\end{eqnarray}
We also define 
$\omega_{\mu}\equiv \sum_{i}\Phi_{\mu i}a_{i\rightarrow\mu}$ and 
$V_{\mu}\equiv \sum_{i}\Phi_{\mu i}^2 \nu_{i\rightarrow\mu}$ for notational convenience. 
Similarly, the means and variances  of the beliefs, $a_i$ and $\nu_i$, are introduced as
$a_{i}\equiv \int\textrm{d}x_{i}x_{i}m_{i}\left(x_i\right)$ and 
$\nu_{i}\equiv \int\textrm{d}x_{i}x_{i}^{2}m_{i}\left(x_i\right)-a_{i}^2 $.
Note that $\vm{a}=(a_i)$ represents the {approximation of the} posterior mean $\left \langle \vm{x} \right \rangle_{|\vm{y},\vm{\Phi}}$. 
This, in conjunction with a consequence of the law of large numbers $\left \langle \vm{x}/|\vm{x} |\right \rangle_{|\vm{y},\vm{\Phi}}
\simeq \left \langle \vm{x} \right \rangle_{|\vm{y},\vm{\Phi}}/\sqrt{N\rho}$, indicates that 
the Bayesian optimal reconstruction is approximately performed as $\hat{\vm{x}}^{\rm Bayes}(\vm{y})\simeq \sqrt{N\rho} \vm{a}/|\vm{a}|$.

To enhance the computational tractability, let us rewrite the functional equations of 
(\ref{m_mu_i}) and (\ref{m_i_mu}) {into} algebraic equations using sets of $a_{i\to\mu}$ and $\nu_{i \to \mu}$.
To do this, we insert the identity
\begin{eqnarray}
1&=&\int\textrm{d}u_{\mu} \delta\left(u_{\mu}-\sum_{i=1}^{N}\Phi_{\mu i}x_i\right)\nonumber\\
 &=&\int\textrm{d}u_{\mu} \frac{1}{2\pi}\int\textrm{d}\hat{u}_{\mu}\exp{\Biggl\{-i\hat{u}_{\mu}\left(u_{\mu}-\sum_{i=1}^{N}\Phi_{\mu i}x_i\right)\Biggr\}}
\end{eqnarray}
into (\ref{m_mu_i}), which yields
\begin{eqnarray}
m_{\mu\rightarrow i}\left(x_i\right)
=\frac{1}{2\pi Z_{\mu\rightarrow i}} \int\textrm{d}u_{\mu}P\left(y_{\mu}|u_{\mu}\right) 
 \int\textrm{d}\hat{u}_{\mu}\exp{\Biggl\{-i\hat{u_{\mu}}\left(u_{\mu}-\Phi_{\mu i}x_i\right)\Biggr\}}\nonumber\\
\times\prod_{j \neq i}\Biggl\{\int\textrm{d}x_{j}m_{j\rightarrow \mu}\left( x_j\right)\exp{\Bigl\{ i \hat{u}_{\mu}\Phi_{\mu j}x_j\Bigr\}} \Biggr\}.
\label{m_mu_i_1}
\end{eqnarray}
The smallness of $\Phi_{\mu i}$ allows us to truncate the Taylor series of the last exponential in {equation} (\ref{m_mu_i_1}) 
up to the second order of $i \hat{u}_{\mu}\Phi_{\mu j}x_j$. 
Integrating  $\int {\rm d}x_j m_{j\to \mu}(x_j) \left (\ldots \right )$ for $j \ne i$, 
we obtain the expression 
\begin{eqnarray}
m_{\mu\rightarrow i}\left(x_i\right)
=\frac{1}{2\pi Z_{\mu\rightarrow i}} \int\textrm{d}u_{\mu}P\left(y_{\mu}|u_{\mu}\right) 
\int\textrm{d}\hat{u}_{\mu}\exp{\Biggl\{-i\hat{u_{\mu}}\left(u_{\mu}-\Phi_{\mu i}x_i\right)\Biggr\}}\nonumber\\
\times\exp{\Biggl\{ i\hat{u}_{\mu}(
\omega_\mu -\Phi_{\mu i}a_{i\to \mu}
)
-\frac{\hat{u}_{\mu}^2}{2}(
V_\mu -\Phi_{\mu i}^2\nu_{i\to \mu}
)\Biggr\}}, 
\label{m_mu_i_4}
\end{eqnarray}
and carrying out the resulting Gaussian {intergral} of $\hat{u}_{\mu}$, we obtain
\begin{eqnarray}
m_{\mu\rightarrow i}\left(x_i\right)
&=&\frac{1}{Z_{\mu\rightarrow i}
\sqrt{2\pi
(V_\mu -\Phi_{\mu i}^2 \nu_{i \to \mu})
}}\int\textrm{d}u_{\mu}P\left(y_{\mu}|u_{\mu}\right)\nonumber\\
&&\times\exp{\Biggl\{-\frac{\left( u_{\mu}-\omega_\mu -\Phi_{\mu i}(x_i -a_{i\rightarrow\mu})\right)^2}
{2(V_\mu -\Phi_{\mu i}^2 \nu_{i \to \mu})}\Biggr\}}. 
\label{m_mu_i_2}
\end{eqnarray}
Since $\Phi_{\mu i}^2$ vanishes as 
$O(N^{-1})$ while $\nu_{i \to \mu} \sim O(1)$, we can omit $\Phi_{\mu i}^2 \nu_{i \to \mu}$  in (\ref{m_mu_i_2}). 
In addition, we replace $\Phi_{\mu j}^2$ in $V_{\mu}= \sum_{i}\Phi_{\mu j}^2\nu_{i\rightarrow \mu}$ 
with its expectation $N^{-1}$, utilizing the law of large numbers.
This removes the dependence on the index $\mu$, making all $V_{\mu}$ equal to their average 
\begin{equation}
V\equiv\frac{1}{N}\sum_{i=1}^{N}\nu_i.
\label{V}
\end{equation}
The smallness of $\Phi_{\mu i}(x_i -a_{i\rightarrow\mu})$ again allows us to truncate the 
Taylor series of the exponential in (\ref{m_mu_i_2}) up to the second order. 
Thus, we have a parameterized expression of $m_{\mu\rightarrow i}\left(x_i\right)$:
\begin{eqnarray}
m_{\mu\rightarrow i}\left(x_i\right)
\propto \exp{\Biggl\{ -\frac{A_{\mu\rightarrow i}}{2}x_{i}^2 +B_{\mu\rightarrow i}x_i \Biggr\}}, 
\label{m_mu_i_3}
\end{eqnarray}
where the parameters $A_{\mu \rightarrow i}$ and $B_{\mu\rightarrow i}$ are evaluated as
\begin{eqnarray}
A_{\mu\rightarrow i}=(g_{\rm out}^{\prime})_{\mu}\Phi_{\mu i}^2\label{A}\\
B_{\mu\rightarrow i}=(g_{\rm out})_{\mu}\Phi_{\mu i}+(g_{\rm out}^{\prime})_{\mu}\Phi_{\mu i}^2 a_{i\rightarrow \mu}\label{B}
\end{eqnarray}
using 
 \begin{eqnarray}
(g_{\rm out})_{\mu} &\equiv& \frac{\partial }{\partial \omega_\mu} 
\log \left (\int {\rm d} u_\mu P(y_\mu|u_\mu)\exp \left (-\frac{(u_\mu-\omega_\mu)^2}{2V} \right ) \right )
\label{g_out} \\
(g^\prime_{\rm out})_{\mu} &\equiv & -\frac{\partial^2 }{\partial \omega_\mu^2} 
\log \left (\int {\rm d} u_\mu P(y_\mu|u_\mu)\exp \left (-\frac{(u_\mu-\omega_\mu)^2}{2V} \right ) \right ).
\label{g_out_p}
\end{eqnarray}
The derivation of these is given in \ref{GAMP_derivation}. 
Equations (\ref{A}) and (\ref{B}) act as the algebraic expression of (\ref{m_mu_i}). 
In the sign output channel, inserting $P\left(y_{\mu}|u_{\mu}\right)=\Theta\left( u_\mu\right)$ into (\ref{g_out}) 
gives $(g_{\rm out})_{\mu}$ and $(g_{\rm out}^{\prime})_{\mu}$ for 1-bit CS as
\begin{eqnarray}
(g_{\rm out})_{\mu}=\frac{\exp{\left( -\frac{\omega_{\mu}^2}{2V} \right)}}
{\sqrt{2\pi V}H\left( -\frac{\omega_{\mu}}{\sqrt{V}}\right)}  \label{g_out_1bit}\\
(g_{\rm out}^{\prime})_{\mu}=(g_{\rm out})_{\mu}^2 + \frac{\omega_{\mu}}{V}(g_{\rm out})_{\mu}\label{g_out_p_1bit}.
\end{eqnarray}

To obtain a similar expression for (\ref{m_i_mu}), we substitute the last expression of (\ref{m_mu_i_3})
into (\ref{m_i_mu}), which leads to 
\begin{eqnarray}
m_{i\rightarrow\mu}(x_i)=\frac{1}{\tilde{Z}_{i\rightarrow\mu}}\left[(1-\rho)\delta(x_i)+\rho \tilde{P}(x_i) \right]
 e^{-(x_{i}^2 /2)\sum\limits_{\gamma\neq\mu}A_{\gamma\rightarrow i} +x_i\sum\limits_{\gamma\neq\mu}B_{\gamma\rightarrow i} }.
\end{eqnarray}
This indicates that $\prod_{\gamma \neq \mu}m_{\gamma\rightarrow i}\left(x_i\right)$ in (\ref{m_i_mu}) can 
be expressed as a Gaussian distribution with 
mean $(\sum_{\gamma\neq\mu}B_{\gamma\rightarrow i})/(\sum_{\gamma\neq\mu}A_{\gamma\rightarrow i})$ and 
variance $(\sum_{\gamma\neq\mu}A_{\gamma\rightarrow i})^{-1}$.
Inserting these into (\ref{a_imu}) and (\ref{nu_imu}) provides the algebraic expression of
(\ref{m_i_mu}) as 
\begin{eqnarray}
a_{i\rightarrow\mu}=f_{a}\left( \frac{1}{\sum_{\gamma\neq\mu}A_{\gamma\rightarrow i}}, \frac{\sum_{\gamma\neq\mu}B_{\gamma\rightarrow i}}{\sum_{\gamma\neq\mu}A_{\gamma\rightarrow i}}\right)  \label{VstepA},\\
\nu_{i\rightarrow\mu}=f_{c}\left( \frac{1}{\sum_{\gamma\neq\mu}A_{\gamma\rightarrow i}}, \frac{\sum_{\gamma\neq\mu}B_{\gamma\rightarrow i}}{\sum_{\gamma\neq\mu}A_{\gamma\rightarrow i}}\right) \label{VstepNu},  
\end{eqnarray}
 {
where $f_a(\Sigma^2,R)$ and $f_c(\Sigma^2,R)$ stand for the mean and variance of 
an auxiliary distribution of $x$
\begin{eqnarray}
{\cal M}
(x|\Sigma^2,R)
=\frac{1}{{\cal Z}(\Sigma^2,R)} 
\left[(1-\rho)\delta(x)+\rho \tilde{P}(x) \right]
 \frac{1}{\sqrt{2\pi\Sigma^2}}e^{-\frac{(x-R)^2}{2\Sigma^2}}
\end{eqnarray}
where ${\cal Z}(\Sigma^2,R)$ is a normalization constant, 
respectively. 
For instance, when $\tilde{P}(x)$ is a Gaussian distribution of mean $\bar{x}$ and variance $\sigma^2$, 
we have }
\begin{eqnarray}
f_{a}(\Sigma^2 , R) = \frac{\bar{x}\Sigma^2+R\sigma^2}
                    {\frac{(1-\rho)(\sigma^2+\Sigma^2)^{3/2}}{\rho \Sigma} \exp{\Bigl\{-\frac{R^2}{2\Sigma^2}+\frac{(R-\bar{x})^2}{2(\sigma^2+\Sigma^2)}\Bigr\}}+(\sigma^2+\Sigma^2)}\label{fa},\\
f_{c}(\Sigma^2 , R)=\Biggl\{ \rho(1-\rho)\Sigma\left(\sigma^2+\Sigma^2\right)^{-5/2}\left[\sigma^2\Sigma^2\left(\sigma^2+\Sigma^2\right)+\left(\bar{x}\Sigma^2+R\sigma^2\right)^2\right]\nonumber\\
                    \times\exp{\Bigl\{-\frac{R^2}{2\Sigma^2}-\frac{(R-\bar{x})^2}{2(\sigma^2+\Sigma^2)}\Bigr\}}
                    +\rho^2\exp{\Bigl\{-\frac{(R-\bar{x})^2}{\sigma^2+\Sigma^2}\Bigr\}}\frac{\sigma^2\Sigma^4}{(\sigma^2+\Sigma^2)^2}\Biggr\} \nonumber\\
                    \times\Biggl\{(1-\rho)\exp{\Bigl\{-\frac{R^2}{2\Sigma^2}\Bigr\}}+\rho\frac{\Sigma}{\sqrt{\sigma^2+\Sigma^2}}\exp{\Bigl\{ -\frac{(R-\bar{x})^2}{2(\sigma^2+\Sigma^2)}\Bigr\}} \Biggr\}^{-2}.
                    \label{fc}
\end{eqnarray}

For the signal reconstruction, we need to evaluate the moments of $m_i(x_i)$. 
This can be performed by simply adding back the $\mu$ dependent part to (\ref{VstepA}) and (\ref{VstepNu}) as
\begin{eqnarray}
a_i=f_a (\Sigma^2_i , R_i) , \label{a}\\
\nu_i=f_c (\Sigma^2_i , R_i), \label{nu}
\end{eqnarray}
where $\Sigma^2_i= \left(\sum_{\mu}A_{\mu\rightarrow i}\right)^{-1}$, $R_i= 
\frac{\sum_{\mu}B_{\mu\rightarrow i}}{\sum_{\mu}A_{\mu\rightarrow i}}$.
For large $N$, 
$\Sigma^{2}_{i}$ typically converges to a constant, independent of the index, as $\Sigma^2$. 
This, in conjunction with (\ref{A}) and (\ref{B}), yields 
\begin{eqnarray}
\Sigma^2=\left( \frac{1}{N}\sum_{\mu}(g_{\rm out}^{\prime})_{\mu} \right)^{-1}, \label{Sigma}\\
R_i=\left( \sum_{\mu}(g_{\rm out})_{\mu}\Phi_{\mu i}\right)\Sigma^2 +a_i.\label{R}
\end{eqnarray}

BP updates $2MN$ messages using (\ref{A}), (\ref{B}), (\ref{VstepA}), and 
(\ref{VstepNu}) ($i=1,2,\cdots N, \mu=1,2,\cdots M$) in each iteration.
This requires a computational cost of $O(M^2 \times N + M \times N^2)$ per iteration, 
which may limit the practical utility of BP to systems of relatively small size.  
To enhance the practical utility, let us rewrite the BP equations {into} those of $M+N$ messages for large $N$, which 
will result in a significant reduction of computational complexity to $O(M\times N)$ per iteration.
To do this, 
we express $a_{i\to \mu}$ by applying Taylor's expansion to (\ref{VstepA})  around $R_i$ as
\begin{eqnarray}
a_{i\rightarrow\mu}
&=& f_{a}\left( \frac{1}{\sum_{\gamma}A_{\gamma\rightarrow i}-A_{\mu\rightarrow i}}, \frac{\sum_{\gamma}B_{\gamma\rightarrow i}-B_{\mu\rightarrow i}}{\sum_{\gamma}A_{\gamma\rightarrow i}-A_{\mu\rightarrow i}}\right)\nonumber\\
&\simeq & a_i 
+\frac{\partial f_{a}(\Sigma^2,R_i)}{\partial R_i} (-B_{\mu\rightarrow i}\Sigma^2)+ O(N^{-1}), 
\label{TAP}
\end{eqnarray}
where $B_{\mu\rightarrow i}\sim O(N^{-1/2})$  and 
$\sum_{\gamma}A_{\gamma\rightarrow i}-A_{\mu\rightarrow i}$ is approximated as $\sum_{\gamma}A_{\gamma\rightarrow i}=\Sigma^{-2}$,
because of the smallness of $A_{\mu\rightarrow i}\propto \Phi_{\mu i}^2 \sim O(N^{-1})$.   
Multiplying this by $\Phi_{\mu i}$  and summing  the resultant expressions over $i$
yields
\begin{eqnarray}
\omega_{\mu}
= \sum_{i} \Phi_{\mu i}a_i -(g_{\rm out})_{\mu}V, \label{omega}
\end{eqnarray}
where we have used $\nu_i =f_c=\Sigma^2 \frac{\partial f_a}{\partial R_i}$, 
which can be confirmed by (\ref{fa}) and (\ref{fc}).

Let us assume that $\{(a_i,\nu_i)\}$ and 
$\left \{\left ((g_{\rm out})_\mu, (g_{\rm out}^\prime)_\mu \right )\right \}$ 
are initially set to certain values. Inserting these into (\ref{V}) and (\ref{omega}) gives
$V$ and $\{\omega_\mu\}$. Substituting these into {equations} (\ref{g_out_1bit}) and (\ref{g_out_p_1bit})
yields a set of $\left \{\left ((g_{\rm out})_\mu, (g_{\rm out}^\prime)_\mu \right )\right \}$, which, in conjunction with $\{a_i\}$, offers
$\Sigma^2$ and $\{R_i\}$ through (\ref{Sigma}) and (\ref{R}). 
Inserting these into (\ref{a}) and (\ref{nu}) offers a new set of $\{(a_i,\nu_i)\}$. 
In this way, the iteration of (\ref{V}), (\ref{omega}) $\to $ (\ref{g_out_1bit}), (\ref{g_out_p_1bit})
$\to $ (\ref{Sigma}), (\ref{R}) $\to $ (\ref{a}), (\ref{nu}) $\to $  (\ref{V}), (\ref{omega}) $\to \ldots$
constitutes a closed set of equations to update the sets of $\{(a_i,\nu_i)\}$ 
and $\left \{\left ((g_{\rm out})_\mu, (g_{\rm out}^\prime)_\mu \right )\right \}$. 
This is the generic GAMP algorithm given a likelihood function $P(y|u)$ and   
a prior distribution $P(x)$ \cite{GAMP}. 

We term the entire procedure 
{the Approximate Message Passing for 1-bit
Compressed Sensing (1bitAMP) algorithm.} The pseudocode of this algorithm 
is summarized in Figure \ref{proposedalgorithm}. 
Three issues are noteworthy. 
First, for relatively large systems, e.g., $N =1024$, the iterative procedure converges easily  in most cases. 
Nevertheless, since it relies on the law of large numbers, 
some divergent behavior appears as $N$ becomes smaller. 
Even for such cases, however, employing an appropriate damping factor in conjunction with 
a normalization of $|\vm{a}|$ at each update considerably improves the convergence property. 
Second, the most time-consuming parts of this iteration are the matrix-vector multiplications 
$\sum_{\mu}(g_{\rm out})_{\mu}\Phi_{\mu i}$ in (\ref{R}) and 
$\sum_{i} \Phi_{\mu i} a_i$ in (\ref{omega}). This indicates that
the computational complexity is $O(NM)$ per iteration. 
Finally, $a_i$ in {equation} (\ref{R}) and $(g_{\rm out})_{\mu}V$ in {equation} (\ref{omega}) correspond to 
what is known as the {\em Onsager reaction term} in the spin glass literature  \cite{Onsager1,Onsager2}.
These terms stabilize the convergence of {1bitAMP}, 
effectively canceling the self-feedback effects.

\begin{figure}
\renewcommand{\thepseudocode}{\arabic{pseudocode}}
\setcounter{pseudocode}{0}
\begin{pseudocode}[ruled]
{Approximate Message Passing for 1-bit CS}
{\mathbf{a}^*, \mathbf{\nu}^*, \mathbf{\omega}^*}

1)\ \mbox{\bf Initialization}:\\
 \hspace{15pt}\text{a seed}: \hspace{65pt} \mathbf{a}_{0} \GETS \mathbf{a}^*\\
 \hspace{15pt}\text{$\nu$ seed}: \hspace{65pt} \mathbf{\nu}_0 \GETS \mathbf{\nu}^* \\
 \hspace{15pt}\text{$\omega$ seed}: \hspace{65pt} \mathbf{\omega}_0 \GETS \mathbf{\omega}^* \\
 \hspace{15pt}\text{Counter}: \hspace{62pt} k\GETS 0\\

2)\ \mbox{\bf Counter increase}:\\
 \hspace{30pt}k \GETS k+1\\

3)\ \mbox{\bf Mean of variances of posterior information message distributions}:\\
 \hspace{30pt}\mathbf{V}_{k}\GETS  \textsc{N}^{-1}(\text{sum} (\mathbf{\nu}_{k-1}))\vm{1}\\
4)\ \mbox{\bf Self-feedback cancellation}:\\
 \hspace{30pt}
 \mathbf{\omega}_{k} \GETS \mathbf{\Phi}\mathbf{a}_{k-1}-\mathbf{V}_{k}g_{\rm out}(\mathbf{\omega}_{k-1}, \mathbf{V}_{k})\\

5)\ \mbox{\bf Variances of output information message distributions}:\\
 \hspace{30pt}\Sigma_k^2 \GETS \textsc{N}(\text{sum} (g_{\rm out}^{\prime}(\mathbf{\omega}_{k}, \mathbf{V}_{k})))^{-1}\\

6)\ \mbox{\bf Average of output information message distributions}:\\ 
 \hspace{30pt}(\mathbf{R})_k\GETS \mathbf{a}_{k-1}+(g_{\rm out}(\mathbf{\omega}_{k}, \mathbf{V}_{k})\mathbf{\Phi})\Sigma_k^2\\

7)\ \mbox{\bf Posterior mean}:\\
 \hspace{30pt}{\mathbf{a}}_{k}\GETS f_a (\Sigma_{k}^2\vm{1}, \mathbf{R}_{k})\\

8)\ \mbox{\bf Posterior variance}:\\
 \hspace{30pt}{\mathbf{\nu}}_{k}\GETS f_c (\Sigma_{k}^2\vm{1}, \mathbf{R}_{k})\\

 
9)\ \bold{Iteration}: \mbox{Repeat from step 2 until convergence.}
\end{pseudocode}
\protect
\caption{\protect\label{proposedalgorithm}Pseudocode for 
{1-bitAMP}.
$\mathbf{a}^*$, $\mathbf{\nu}^*$, and $\mathbf{\omega}^*$ are the convergent vectors of $\mathbf{a}_k$, $\mathbf{\nu}_k$, and $\mathbf{\omega}_k$ obtained in the previous loop.
 $\vm{1}$ is the $N$-dimensional vector whose entries are all unity. }
\label{algorithm}
\end{figure}

\section{Results}

\begin{figure}[t]
  \begin{center}
    \begin{tabular}{c}
    
      \begin{minipage}{0.5\hsize}
        \begin{center}
          \includegraphics[clip, angle=0,width=8cm]{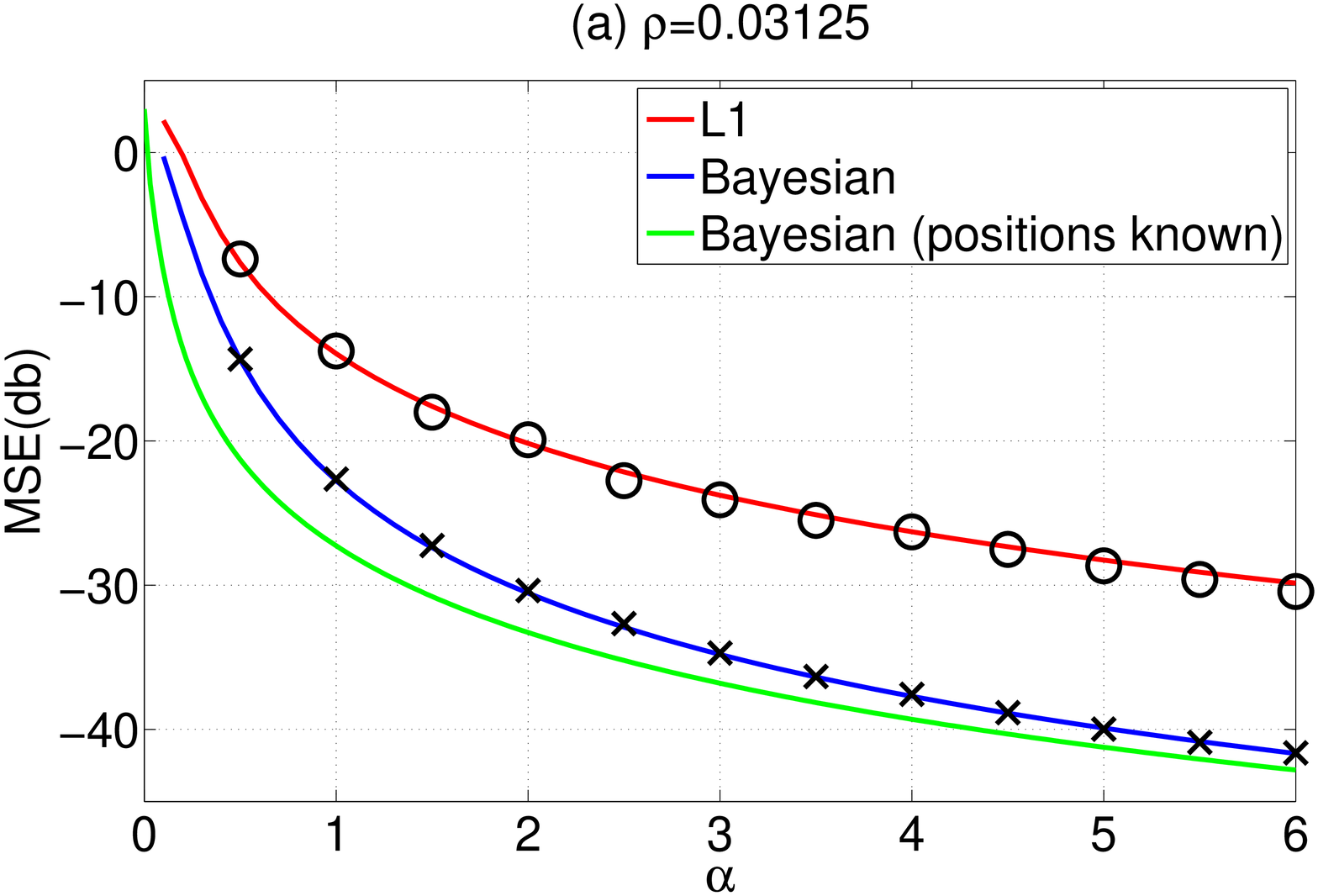}
          \hspace{1.6cm} 
        \end{center}
      \end{minipage}
      
      \begin{minipage}{0.5\hsize}
        \begin{center}
          \includegraphics[clip, angle=0,width=8cm]{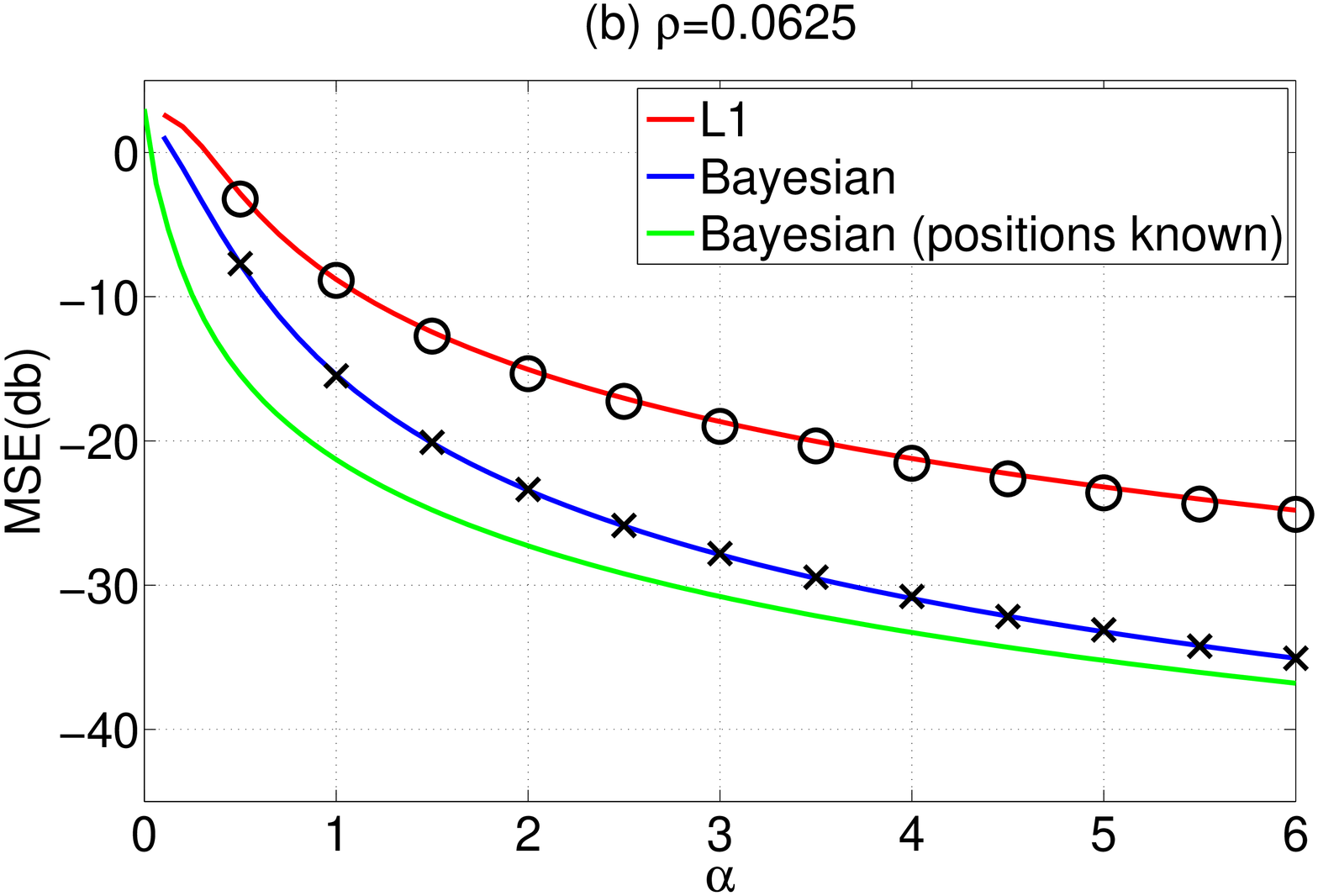}
          \hspace{1.6cm}
        \end{center}
      \end{minipage} 
\\
      \begin{minipage}{0.5\hsize}
        \begin{center}
          \includegraphics[clip, angle=0,width=8cm]{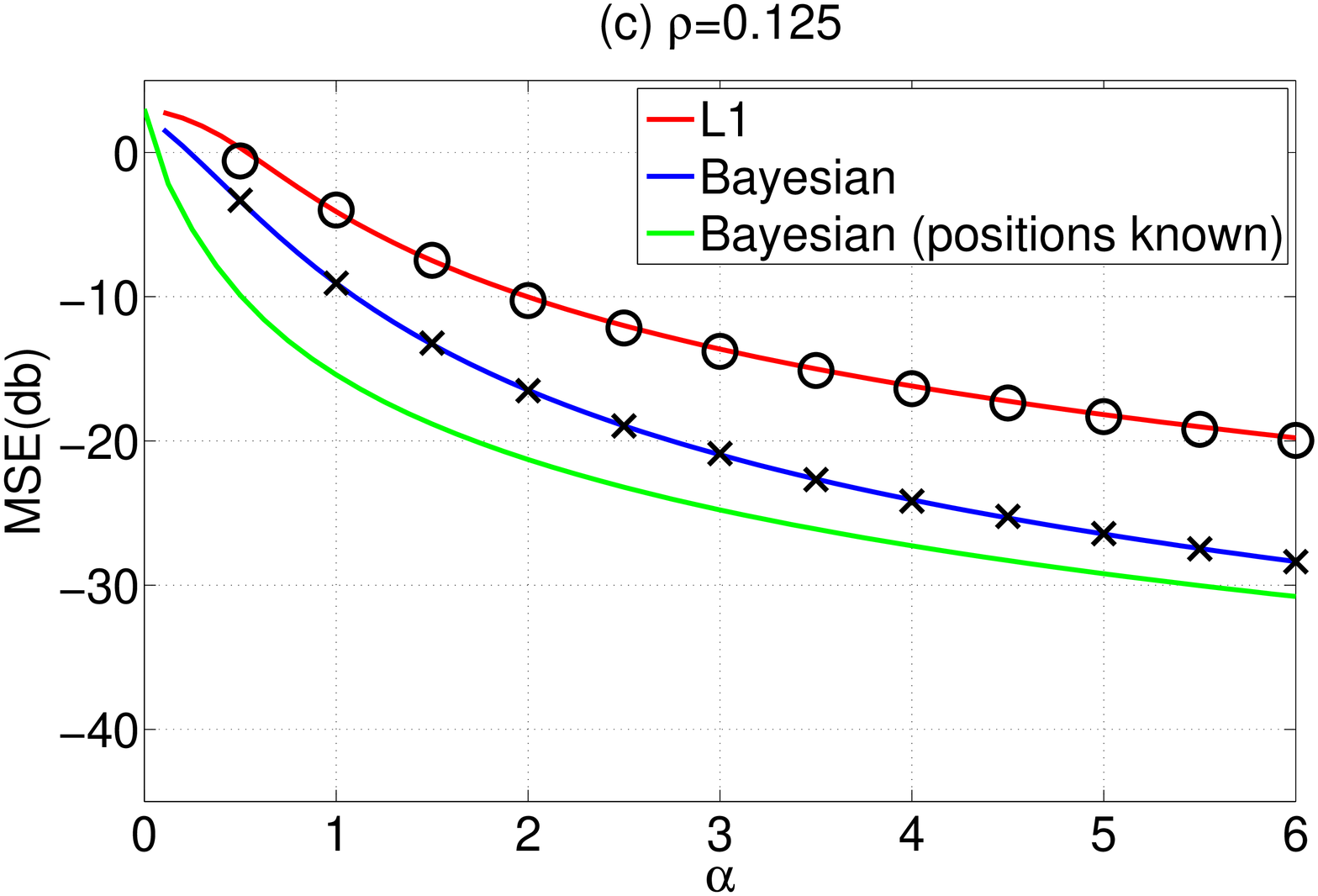}
          \hspace{1.6cm} 
        \end{center}
      \end{minipage}
      
      \begin{minipage}{0.5\hsize}
        \begin{center}
          \includegraphics[clip, angle=0,width=8cm]{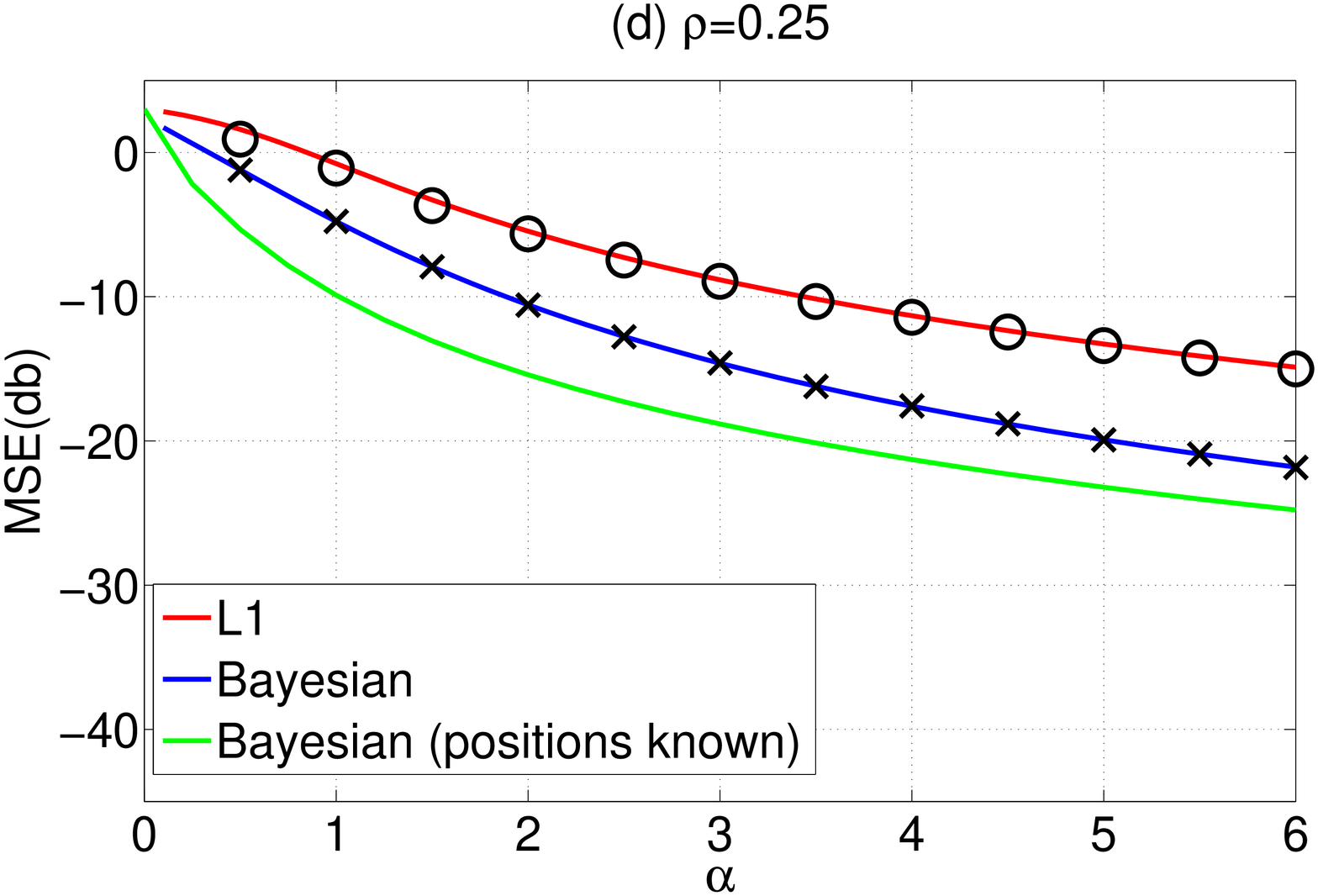}
          \hspace{1.6cm} 
        \end{center}
      \end{minipage}
    \end{tabular}
\caption{\label{fig:BPMSE}MSE (in decibels) versus measurement bit ratio $\alpha$ for 1-bit CS
 {for Gauss-Bernoulli prior}.
(a), (b), (c), and (d) correspond to $\rho=0.03125, 0.0625, 0.125$, and $0.25$, respectively.
Red curves represent the theoretical prediction of $l_1$-norm minimization \cite{YingKaba1bitCS2013};
blue curves represent the theoretical prediction of the Bayesian optimal approach;
green curves represent the theoretical prediction of the Bayesian optimal approach 
when the positions of all nonzero components in the signal are known, 
which is obtained by setting $\alpha\rightarrow\alpha/\rho$ and $\rho\rightarrow 1$ in (\ref{m_result}) and (\ref{mhat_result}).
Crosses represent the average of 1000 experimental results by the {1bitAMP} algorithm in Figure \ref{algorithm} for a system size of $N=1024$. {Circles show the average of 1000 experimental
results by an $l_1$-based algorithm RFPI proposed in  
\cite{1bitCS} for 1-bit CS in the system size of
$N = 128$. Although the replica symmetric prediction for the $l_1$-based approach is thermodynamically unstable,
the experimental results of RFPI are numerically consistent with it very well.}}
\label{figure2}
\end{center}
\end{figure}

\begin{figure}[ht]
  \begin{center}
          \includegraphics[clip, angle=0,width=10cm,keepaspectratio]{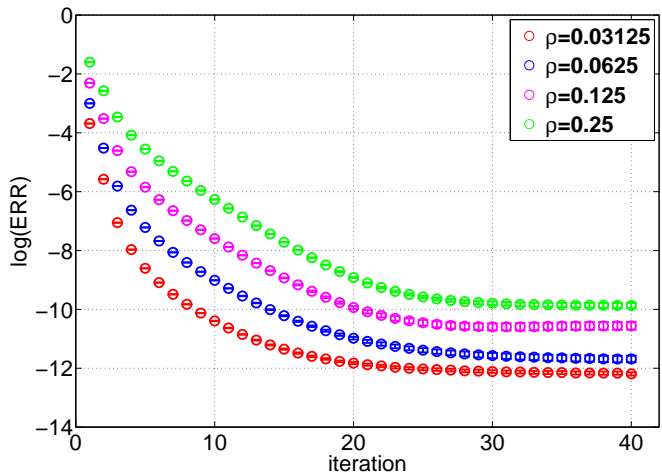}
        \end{center}
\caption{
Mean square differences (ERR) between estimated signals of two successive iterative update
of 1bitAMP for a signal size of $N=1024$ and $\alpha=6$, and the errorbar, which are evaluated from  {10000} experiments. 
Red, blue, magenta, and green represent $\rho=0.03125, 0.0625, 0.125$, and $0.25$, respectively.}
\label{ERR_iteration}
\end{figure}

To examine the utility of {1bitAMP}, we carried out numerical experiments  {for 
Gauss-Bernoulli prior,}
\begin{equation}
P\left(x\right) =\left(1-\rho\right) \delta \left( x \right)
+ \frac{\rho}{\sqrt{2\pi}}e^{-\frac{1}{2}x^2}
\end{equation}
with system size $N=1024$. 
We set initial conditions of $\vm{a}=0\vm{1}, \vm{\nu}=\rho\vm{1}$, and $\vm{\omega}=\vm{1}$, 
where $\vm{1}$ is the $N$-dimensional vector whose entries are all unity, 
and stopped the algorithm after $20$ iterations (Figure \ref{ERR_iteration}). 
The MSE results for various sets of $\alpha$ and $\rho$ 
are shown as crosses in Figures \ref{figure2} (a)--(d). 
Each cross denotes an experimental estimate obtained 
from 1000 experiments. The standard deviations are omitted, as they are smaller than the size of the symbols. 
The convergence time is short, which verifies the significant computational efficiency of {1bitAMP}.
For example, in a MATLAB$^{\textregistered}$ environment, for $\alpha=3, \rho=0.0625$, 
one experiment takes around 0.2 s. 

To test the consistency of {1bitAMP} with respect to replica theory, 
we solved the saddle-point equations (\ref{m_result}) and (\ref{mhat_result})  {for 
Gauss-Bernoulli prior} for each 
set of $\alpha$ and $\rho$.
The blue curves in Figures \ref{figure2} (a)--(d) show the theoretical MSE evaluated by (\ref{MSE}) against 
 $\alpha$ for $\rho=0.03125, 0.0625, 0.125$, and $0.25$. 
The excellent agreement between the numerical experiments and the theoretical prediction indicates 
that {1bitAMP} nearly saturates the potentially achievable MSE of the signal recovery scheme based on the Bayesian optimal approach.

For comparison, Figures \ref{figure2} (a)--(d) also plot 
the replica symmetric 
prediction of MSEs for the $l_1$-norm minimization approach (red curves)  {to the 
Gauss-Bernoulli signal}, 
which was examined in an earlier study~\cite{YingKaba1bitCS2013}.
Although the replica symmetric prediction is thermodynamically unstable, 
it is numerically consistent with the experimental results {(circles)} given by the algorithm proposed in \cite{1bitCS}. 
Therefore, the prediction at least serves as a good approximation.

We also plot the MSEs of the Bayesian optimal approach when the positions of the non-zero components of $\vm{x}$ are known 
(green curves). 
These act as lower bounds for the  MSEs of the Bayesian optimal approach. 
When the positions of non-zero components of $\vm{x}$ are known, we need not consider the part containing zero components. 
Therefore, the problem can be seen as that defined  
when a $\rho N$-dimensional signal $\vm{x}$ is measured by an $\alpha N\times \rho N$-dimensional matrix. 
In such situations, performance can be evaluated by setting $\rho =1$ and replacing 
$\alpha $ with $\alpha/\rho$ in (\ref{m_result}) and (\ref{mhat_result}), as the dimensionality of $\vm{x}$ is reduced from $N$ to $N\rho$. 
Solving (\ref{m_result}) and (\ref{mhat_result}) for $\alpha \gg 1$ shows that 
the MSEs of the Bayesian optimal approach can be asymptotically expressed as
\begin{eqnarray}
{\rm MSE^{Bayes}}  \simeq \frac{1.9258 \rho^2}{\alpha^2} = 1.9258 \times \left (\frac{N\rho}{M} \right )^2
\label{asymptotics}
\end{eqnarray}
for $\alpha \gg 1$, which accords exactly with the asymptotic form of the green curves (Figure~\ref{fig.asymptotic}: left panel, see \ref{app3}). 
 {Since we defined MSE with the normalized signal, 
this holds for all zero mean Gauss-Bernoulli distributions of any variance. }
On the other hand, 
the asymptotic form of the MSE for the $l_1$-norm approach is evaluated as
\begin{eqnarray}
{\rm MSE}^{l_1}
\simeq \frac{\pi^2\left[2(1-\rho)H\left (1/\sqrt{\hat{q}_{l_1}^\infty(\rho)} \right )+\rho \right]^2}{\alpha^2},
\label{MSE_L1}
\end{eqnarray}
where $\hat{q}_{l_1}^\infty(\rho)$ is the value of $\hat{q}$  for the $l_1$-norm approach obtained for $\alpha \to \infty$ (see \ref{app4}).

\begin{figure}[t]
  \begin{center}
    \begin{tabular}{c}
    
      \begin{minipage}{0.5\hsize}
        \begin{center}
          \includegraphics[clip, angle=0,width=8cm]{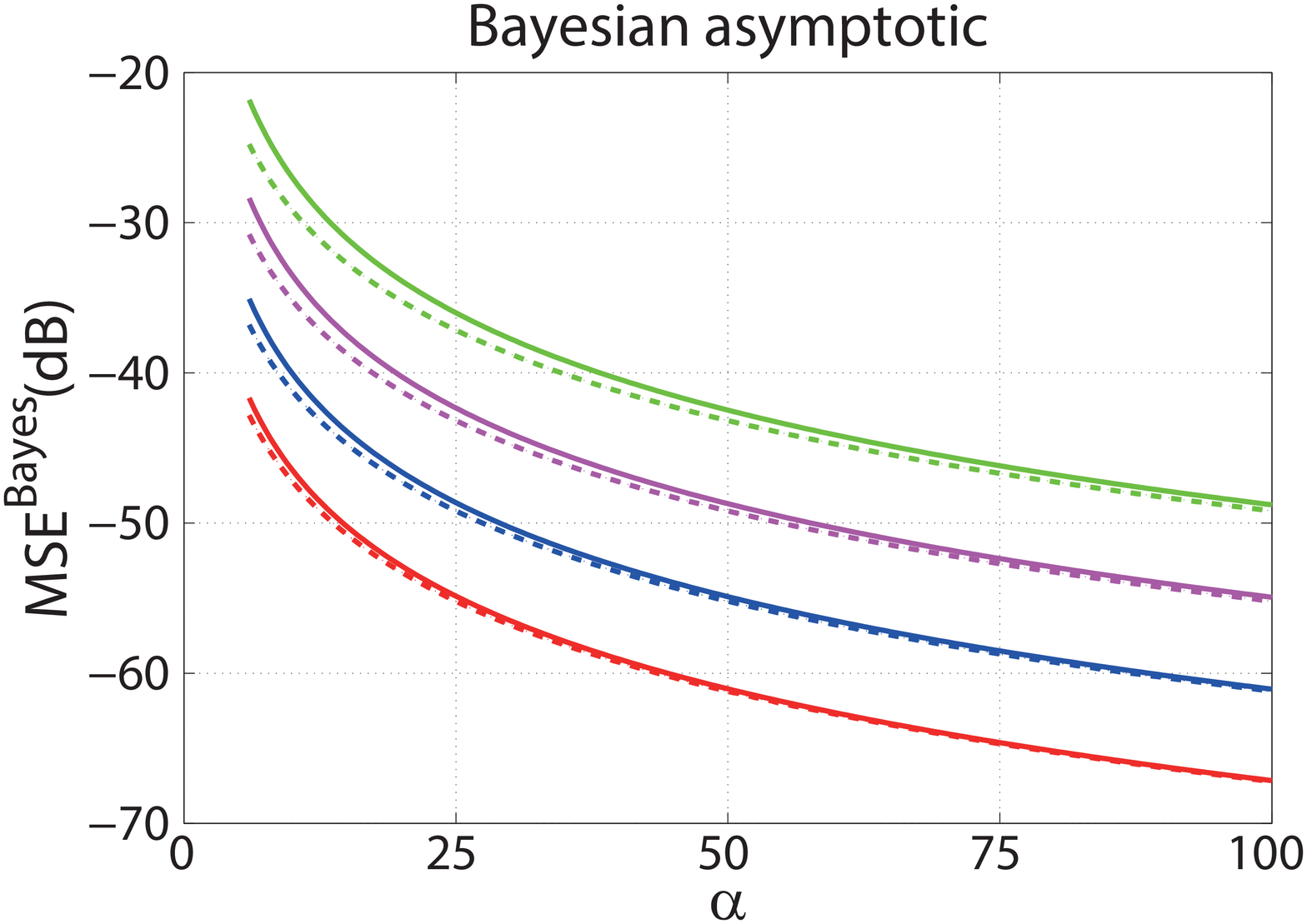}
          \hspace{1.6cm} 
        \end{center}
      \end{minipage}
      
      \begin{minipage}{0.5\hsize}
        \begin{center}
          \includegraphics[clip, angle=0,width=8cm]{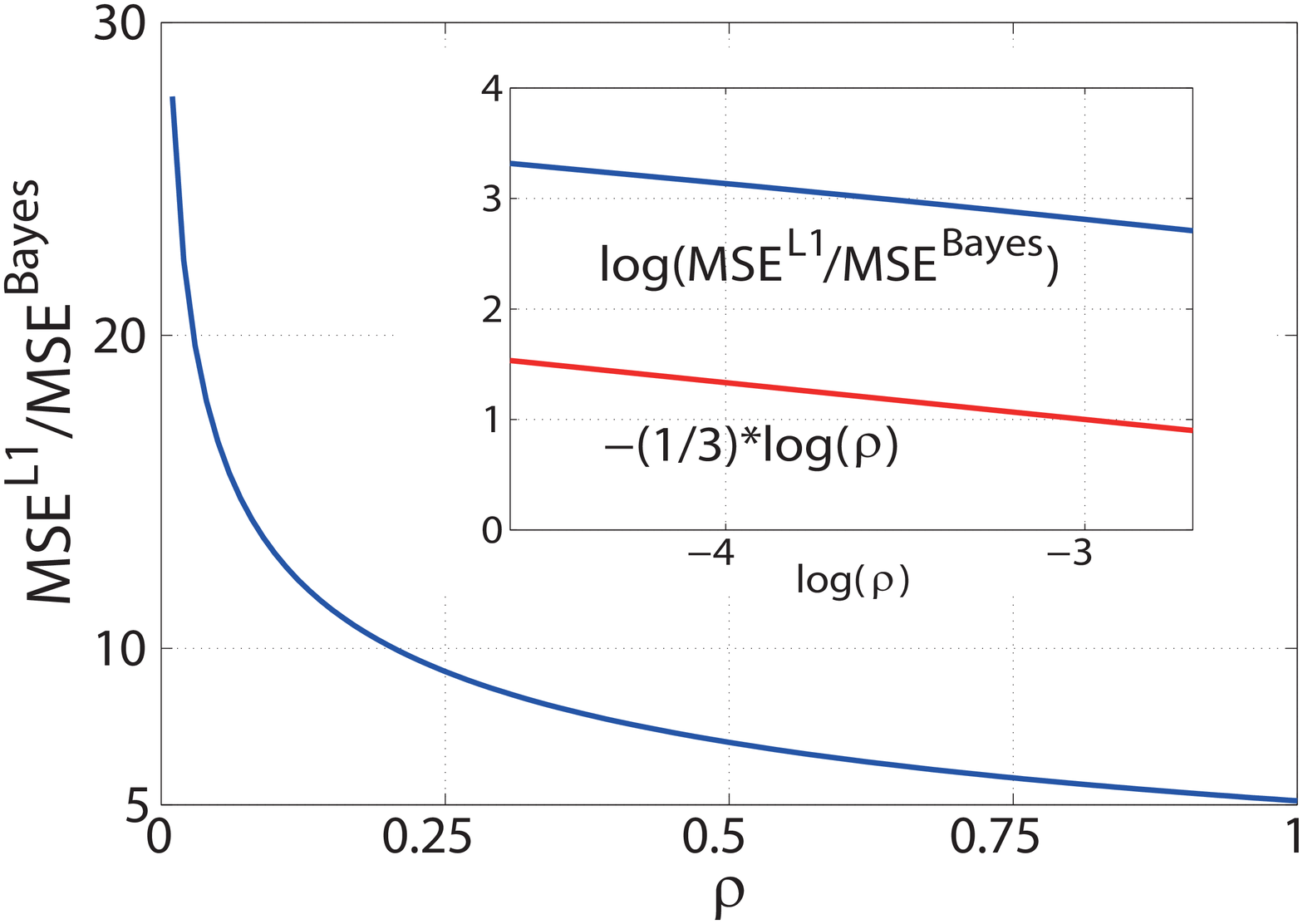}
          \hspace{1.6cm}
        \end{center}
      \end{minipage}
    \end{tabular}
\caption{Left: MSE (in decibels) versus measurement bit ratio $\alpha$ for Bayesian optimal signal reconstruction of 1-bit CS  {for Gauss-Bernoulli prior}.
Red, blue, magenta, and green correspond to {$\rho=0.03125, 0.0625, 0.125$, and
$0.25$, respectively. }
The solid curves represent the theoretical prediction obtained by (\ref{m_result}) and (\ref{mhat_result});
dashed curves show the performance when the positions of non-zero entries are known,
and dotted curves denote the asymptotic forms (\ref{asymptotics}), 
{which are indistinguishable from the dashed curves because they closely overlap. }
Right: Ratio of MSE between $l_{1}$-norm and Bayesian approaches when $\alpha \gg 1$ versus sparsity $\rho$ of the signal.
The inset shows a log-log plot for $0<\rho<0.1$. 
The least-squares fit implies that the ratio diverges as $O(\rho^{-0.33})$ as $\rho \to 0$. }
\label{fig.asymptotic}  
\end{center}
\end{figure}

Equation (\ref{asymptotics}) means that, at least in terms of MSEs,  correct prior knowledge of the sparsity {asymptotically} becomes 
as informative as the knowledge of the exact positions of the non-zero components. 
In most statistical models, the accuracy of asymptotic inference is expressed as a function of 
the ratio $\alpha=M/N$ between the number of data $M$ and the dimensionality of 
the variables to be inferred $N$ \cite{Sompolinsky1993,WatkinRauBiehl1993}. 
Equation (\ref{asymptotics}) indicates that, in the current problem, the dimensionality $N$ is replaced with 
the actual degree of the non-zero components $N\rho$, which originates from the 
singularity of the prior distribution (\ref{sparse}). 
This implies that caution is necessary in testing the validity of statistical models when 
sparse priors are employed, since conventional information criteria such as Akaike's information criterion 
\cite{Akaike1973} and the minimum description length \cite{RIssanen1978} mostly handle 
objective statistical models that are free of singularities, so that the model complexity 
is naively incorporated as the number of parameters $N$ \cite{Watanabe2009}. 

Equation (\ref{MSE_L1}) indicates that, even if prior knowledge of the sparsity is not available, 
optimal convergence can be achieved in terms of the ``exponent  {(decay of 
$O(\alpha^{-2})$})'' as $\alpha \to \infty$ using the 
$l_1$-norm approach. However, the performance can differ considerably in terms of the ``pre-factor  {(coefficient of $\alpha^{-2}$)}'' 
The right panel of Figure \ref{fig.asymptotic} plots the ratio $\mathrm{MSE}^{l_1}/\mathrm{MSE^{Bayes}}$, 
which diverges as $O(\rho^{-0.33})$ as $\rho \to 0$. This indicates that 
prior knowledge of the sparsity of the objective signal is more beneficial as $\rho$ becomes smaller.

\begin{figure}[t]
  \begin{center}
    \begin{tabular}{c}
    
      \begin{minipage}{0.5\hsize}
        \begin{center}
          \includegraphics[clip, angle=0,width=8cm]{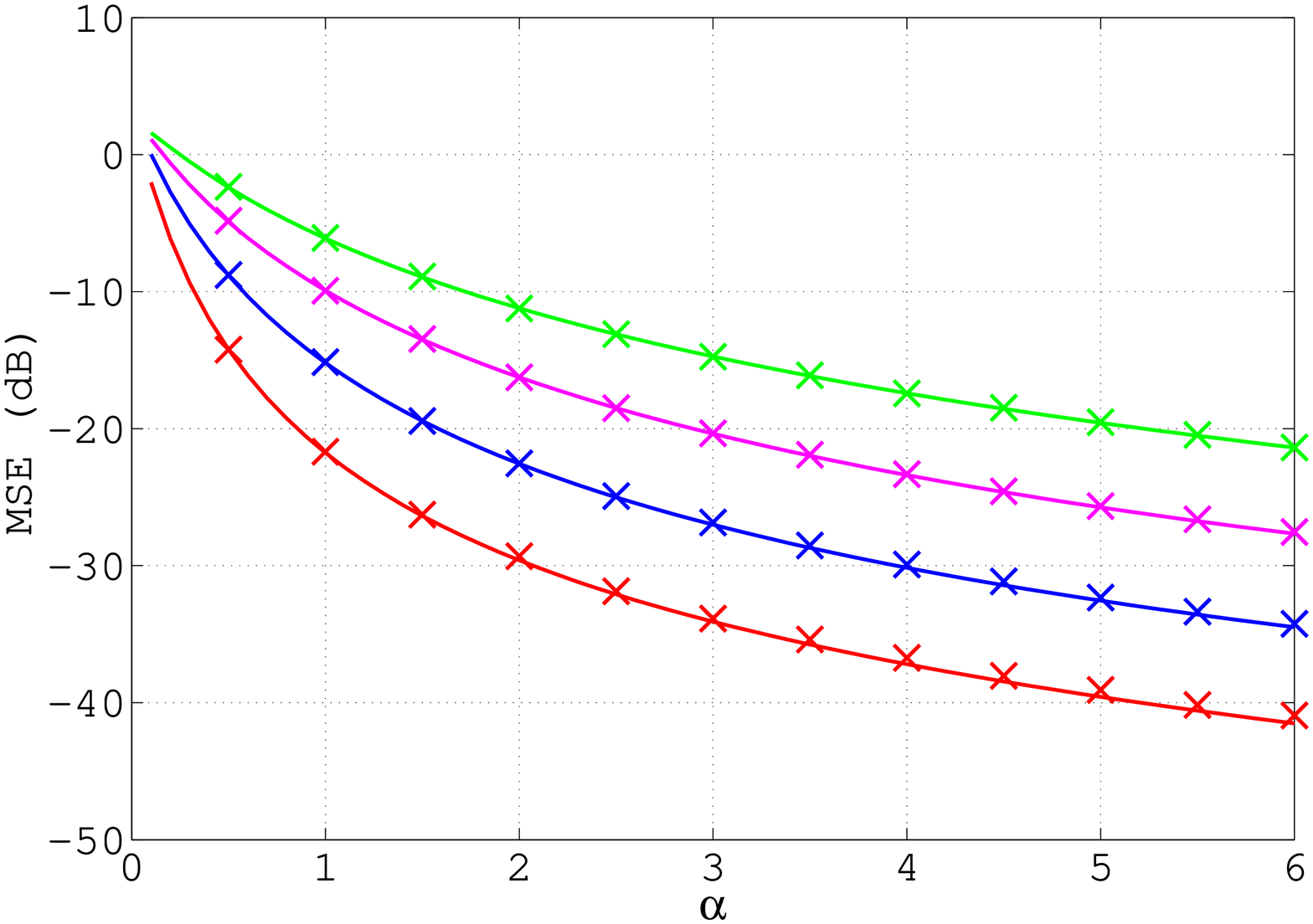}
          \hspace{1.6cm} 
        \end{center}
      \end{minipage}
      
      \begin{minipage}{0.5\hsize}
        \begin{center}
          \includegraphics[clip, angle=0,width=8cm]{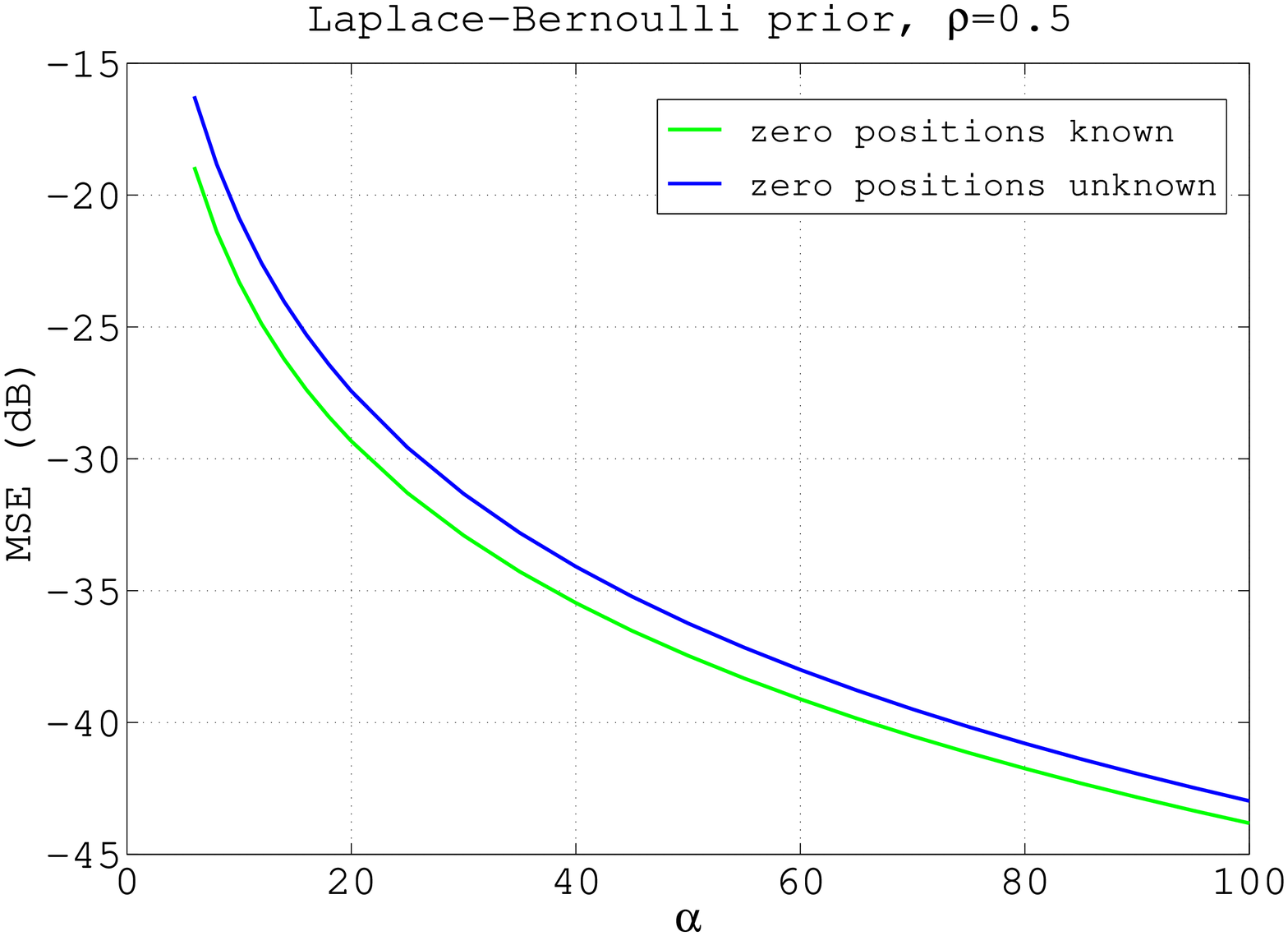}
          \hspace{1.6cm}
        \end{center}
      \end{minipage}
    \end{tabular}
\caption{
 {Left: MSE (in decibels) versus $\alpha$ for 1-bit CS in the case of Laplace-Bernoulli prior. 
Solid lines represent the theoretical prediction and the markers represent the experiment results by 1bitAMP algorithm for a signal size of $N=1024$ and averaged from 1000 experiments. 
Red, blue, magenta, and green represent $\rho=0.03125, 0.0625, 0.125$, and $0.25$, respectively.
Right: Asymptotic behavior of MSE for Laplace-Bernoulli prior, when the positions of zero entries of the signal are known and unknown. This implies that MSE of these two cases are different even asymptotically.
 }}
\label{Laplace_MSEdb}
\end{center}
\end{figure}

 {
For checking the generality of the results obtained for Gauss-Bernoulli prior, 
we also carried out similar analysis for Laplace-Bernoulli prior
\begin{equation}
P\left(x\right) =\left(1-\rho\right) \delta \left( x \right)+\frac{\rho}{2}e^{-|x|}.
\end{equation}
The left panel of Fig.~\ref{Laplace_MSEdb} shows the comparison between the replica prediction and 
the experimental results by GAMP, which supports that the replica and GAMP correspondence 
does hold for general priors. 
The right panel of Fig.~\ref{Laplace_MSEdb} compares the performance with that achieved when the positions 
of non-zero entries are known. Unlike the case of Gauss-Bernoulli prior, the two performances
do not get close even asymptotically. This implies that the significance of utility of the Bayesian approach 
depends considerably on the statistical property of the objective signal. 
}

\section{Summary}
In summary, we have examined the typical performance of {the Bayesian optimal signal recovery for 1-bit CS} using methods from statistical mechanics.
 {For Gauss-Bernoulli prior,} 
using the replica method to compare the performance of the Bayesian optimal approach to the $l_1$-norm minimization,  
we have shown that the utility of correct prior knowledge on the objective signal, 
which is incorporated in the Bayesian optimal scheme, becomes more significant 
as the density of non-zero entries $\rho$ in the signal decreases. 
In addition, we have clarified that,  {for this particular prior,} 
the MSE performance asymptotically saturates that obtained when the exact positions 
of non-zero entries are exactly known as the number of 1-bit measurements increases. 
We have also developed a practically feasible approximate algorithm for Bayesian signal recovery, 
which can be regarded as a special case of the GAMP algorithm.
The algorithm has a computational cost of the square of the system size per update, 
exhibiting a fairly good convergence property as the system size 
becomes larger. The experimental results  {for both Gauss-Bernoulli prior and Laplace-Bernoulli prior} show excellent agreement with the predictions made by the replica method.
These indicate that almost-optimal reconstruction performance can be 
attained with a computational complexity of the square of the signal length per update
 {for general priors}, 
which is highly beneficial in practice. 

Obtaining the correct prior distribution of the sparse signal may be an obstacle 
to applying the current approach in practical problems. One possible solution is to estimate 
hyper-parameters that characterize the prior distribution in the reconstruction stage, 
as has been proposed for normal CS \cite{Krzakala2012}. 
It was reported that orthogonal measurement matrices, rather than those of statistically independent entries,
enhance the signal reconstruction performance for several problems related to  
CS \cite{ShizatoKabashima2009,KabaVehkaperaChatterjee2013,VehkaperaKabaChatterjee2014,KabaVehkapera2014,
OymakHassibi2014,WenWong2014}.
Such devices may also be effective for 1-bit CS.

\ack
YX is supported by JSPS Research Fellowships DC2.  
This study was also partially supported by the JSPS Core-to-Core Program ``Non-equilibrium dynamics of soft matter and information,'' 
JSPS KAKENHI Nos. 26011287 (YX), 25120013 (YK), 
{and} {the Grant DySpaN of Triangle de la Physique (LZ).
Useful discussions with Chistophe Sch\"{u}lke are also acknowledged. }
\appendix

\section{Derivation of $(\ref{eq:free energy})$}
\label{replicaderivation}
\subsection{Assessment of $\left [P^n \left(\vm{y} |\vm{\Phi} \right) \right]_{\vm{\Phi},\vm{y}}$ for $n \in \mathbb{N}$}
Averaging (\ref{eq:expansion}) with respect to $\vm{\Phi}$ and $\vm{y}$ gives the following expression for the $n$-th moment of the partition function:
\begin{eqnarray}
\left [P^n \left(\vm{y} |\vm{\Phi} \right) \right]_{\vm{\Phi},\vm{y}}
=\int \prod_{a=1}^n \left (d \textrm{\boldmath $x$}^a P\left( \textrm{\boldmath $x^a$}\right)\right )
 \times\left[\prod_{a=1}^n \prod_{\mu=1}^M \Theta\left ( (\vm{y})_\mu  (\vm{\Phi} \vm{x}^a)_\mu \right )\right]_{\vm{\Phi},\vm{y}}. 
\label{Zmoment}
\end{eqnarray}
We insert $n(n+1)/2$ trivial identities
\begin{eqnarray}
1=N \int dq_{ab} \delta \left (\vm{x}^a \cdot \vm{x}^b - N q_{ab} \right ), 
\end{eqnarray}
where $a>b=0,1,2,\ldots,n$, into (\ref{Zmoment}). Furthermore, we define a joint distribution of $n+1$ vectors $\{\vm{x}^a\}=\{\vm{x}^0, \vm{x}^1,\vm{x}^2,\ldots,\vm{x}^n \}$ as
\begin{eqnarray}
&&P\left (\{\vm{x}^a\} |\vm{Q}\right )= \frac{1}{V\left (\vm{Q}\right ) } P(\vm{x}^0) \times 
\prod_{a=1}^n \left (P\left( \textrm{\boldmath $x^a$}\right)\right ) 
 \times \prod_{a>b} \delta \left (\vm{x}^a \cdot \vm{x}^b - N q_{ab} \right ), 
\label{replica_joint_dist}
\end{eqnarray}
where $\vm{Q}=(q_{ab})$ is an $(n+1) \times (n+1)$ symmetric matrix 
whose $00$ and other diagonal entries are fixed as $\rho$ and $Q$, respectively.
 $P(\vm{x}^0)=\prod_{i=1}^N \left ((1-\rho)\delta(x_i^0)+\rho \tilde{P}(x_i^0) \right )$ denotes the distribution of the original signal $\vm{x}^0$, and $V\left (\vm{Q}\right )$ is the normalization constant that ensures $\int \prod_{a=0}^n d \vm{x}^a P\left (\{\vm{x}^a\} |\vm{Q}\right )=1$ holds. These indicate that (\ref{Zmoment}) can also be expressed as
\begin{eqnarray}
\left [P^n \left(\vm{y} |\vm{\Phi} \right) \right]_{\vm{\Phi},\vm{y}}
=\int d\vm{Q} \left (V\left (\vm{Q}\right ) \times \Xi\left (\vm{Q}\right ) \right ), 
\label{Zn}
\end{eqnarray}
where $d\vm{Q} \equiv \prod_{a>b}d q_{ab}$ and 
\begin{eqnarray}
\Xi\left (\vm{Q}\right )
=\int \prod_{a=0}^n d\vm{x}^a P\left (\{\vm{x}^a\} |\vm{Q} \right ) 
\left[ {\displaystyle \sum_{\vm{y}}} {\prod_{a=0}^n }\prod_{\mu=1}^M 
\Theta\left(
((\vm{y})_\mu 
(\vm{\Phi} \vm{x}^a)_\mu 
\right)
\right]_{\vm{\Phi}}. 
\label{Xi}
\end{eqnarray}

Equation (\ref{Xi}) can be regarded as the average of $ {\displaystyle \sum_{\vm{y}}\prod_{a=0}^n }\prod_{\mu=1}^M \Theta\left((\vm{y})_\mu (\vm{\Phi} \vm{x}^a)_\mu \right)$ with respect to $\{\vm{x}^a\}$ and $\vm{\Phi}$ over distributions of $P\left (\{\vm{x}^a\} \right )$ and $P(\vm{\Phi}) \equiv \left (\sqrt{2\pi/N} \right )^{-MN} \exp \left (-(N/2) \sum_{\mu,i} \Phi_{\mu i}^2 \right )$. 
In computing this,  note that the central limit theorem guarantees that $u_\mu^a \equiv (\vm{\Phi} \vm{x}^a)_\mu = \sum_{i=1}^N \Phi_{\mu i} x_i^a$ can be handled as zero-mean multivariate Gaussian random numbers whose variance and covariance are given by 
\begin{eqnarray}
\left [u_\mu^a u_\nu^b \right ]_{\vm{\Phi},\{\vm{x}^a\}}
=\delta_{\mu \nu} q_{ab}, 
\end{eqnarray}
when $\vm{\Phi}$ and $\{\vm{x}^a\}$ are generated independently from $P(\vm{\Phi})$ and $P\left (\{\vm{x}^a\} \right )$, respectively. This means that (\ref{Xi}) can be evaluated as
\begin{eqnarray}
\Xi(\vm{Q}) &=&\left (\frac{\int d \vm{u} \exp\left (-\frac{1}{2}\vm{u}^{\rm T} \vm{Q}^{-1} \vm{u} \right )
 {
\displaystyle \sum_{y \in \{+1,-1\}}}\prod_{a=0}^n 
\Theta\left (y u^a \right ) }{
(2\pi)^{(n+1)/2} (\det \vm{Q})^{1/2}} 
\right )^M \cr
&=& \left (2 \int \frac{ d \vm{u}\exp\left (-\frac{1}{2}\vm{u}^{\rm T} \vm{Q}^{-1} \vm{u} \right )
\prod_{a=0}^n \Theta\left (u^a \right )  }{
(2\pi)^{(n+1)/2} (\det \vm{Q})^{1/2}}
 \right )^M. 
\label{logXi}
\end{eqnarray}

On the other hand, expressions 
\begin{eqnarray}
\delta\left (|\vm{x}^a|^2-N Q\right )
=\frac{1}{4 \pi} \int_{-{\rm i}\infty}^{+{\rm i}\infty}
d \hat{q}_{aa} \exp \left (-\frac{1}{2} \hat{q}_{aa} 
\left (|\vm{x}^a|^2-N Q\right ) \right )
\end{eqnarray}
and 
\begin{eqnarray}
\delta\left (\vm{x}^a \cdot \vm{x}^b -N q_{ab}\right )
=\frac{1}{2 \pi} \int_{-{\rm i}\infty}^{+{\rm i}\infty}
d \hat{q}_{ab} \exp \left ( \hat{q}_{ab} 
\left (\vm{x}^a \cdot \vm{x}^b-N q_{ab} \right ) \right ), 
\end{eqnarray}
and use of the saddle-point method, offer 
\begin{eqnarray}
&&\frac{1}{N} \log V(\vm{Q})
=\mathop{\rm extr}_{\hat{\vm{Q}}}
\left \{
-\frac{1}{2} {\rm Tr} \hat{\vm{Q}}\vm{Q} \right . \cr
&& \hspace*{2cm}
\left . + \log 
\left (
\int d\vm{x} 
P(x^0) \prod_{a=1}^n P(x^a)\exp \left (\frac{1}{2} 
\vm{x}^{\rm T} \hat{\vm{Q} }\vm{x} \right )
\right ) 
\right \}.  
\label{logV}
\end{eqnarray}
Here, $\vm{x}=(x^0,x^1,\ldots,x^n)^{\rm T}$ and $\hat{\vm{Q}}$ is an $(n+1)\times (n+1)$ symmetric matrix whose $00$ and other diagonal components are given as $0$ and $-\hat{q}_{aa}$, respectively. The off-diagonal entries are  $\hat{q}_{ab}$. 
Equations (\ref{logXi}) and (\ref{logV}) indicate that $N^{-1} \log \left [P^n \left(\vm{y} |\vm{\Phi} \right) \right]_{\vm{\Phi},\vm{y}}$ is correctly evaluated by  the saddle-point method with respect to $\vm{Q}$ in the assessment of the right-hand side of (\ref{Zn}), when $N$ and $M$ tend to infinity and $\alpha=M/N$ remains finite. 

\subsection{Treatment under the replica symmetric ansatz}
Let us assume that the relevant saddle-point for assessing (\ref{Zn}) is of the form (\ref{RSanzats}) and, accordingly, 
\begin{eqnarray}
\hat{q}_{ab}=\hat{q}_{ba} =\left \{
\begin{array}{ll}
0, &( \mbox{$a=b=0$})\\
\hat{m}, &( \mbox{$a=1,2,\ldots,n$; $b=0$})\\
\hat{Q}, &( \mbox{$a=b=1,2,\ldots,n$})\\
\hat{q}, &( \mbox{$a\ne b =1,2,\ldots,n$})
\end{array}
\right .  .
\label{RShatQ}
\end{eqnarray}
The $n+1$-dimensional Gaussian random variables $u^0,u^1,\ldots u^n$ whose variance and covariance are given by (\ref{RSanzats}) can be expressed as 
\begin{eqnarray}
&& u^0=\sqrt{\rho-\frac{m^2}{q}}s^0 + \frac{m}{\sqrt{q}} z, \label{Newgauss0}\\
&& u^a=\sqrt{Q-q} s^a+ \sqrt{q} z, \ (a=1,2,\ldots,n) \label{Newgaussa} 
\end{eqnarray}
utilizing $n+2$ independent standard Gaussian random variables $z$ and $s^0,s^1,\ldots,s^n$. This indicates that (\ref{logXi}) is evaluated as
\begin{eqnarray}
\Xi(\vm{Q})=\left (
2 \int {\rm D}z 
H\left (\frac{m}{\sqrt{\rho q-m^2} }z \right )
H^n \left (\sqrt{\frac{q}{Q-q}} z \right ) \right )^M. 
\label{NewXi}
\end{eqnarray}
On the other hand, substituting (\ref{RShatQ}) into (\ref{logV}), in conjunction with the identity
\begin{eqnarray}
\exp \left (\hat{q} \sum_{a>b(\ge 1)} x^a x^b \right )=\int {\rm D}z \exp \left (\sum_{a=1}^n 
\left (
-\frac{\hat{q}}{2} (x^a)^2 + \sqrt{\hat{q}} z x^a \right ) \right ), 
\end{eqnarray}
provides 
\begin{eqnarray}
&&\frac{1}{N} \log V(\vm{Q} )=\mathop{\rm extr}_{\hat{Q}, \hat{q},\hat{m} }
\left \{
\frac{n}{2}\hat{Q}Q-\frac{n(n-1)}{2} \hat{q}q -\hat{m} m \right . \cr
&&\left . 
+\log 
\left [\left (
\!
\int 
\!
dx P(x) \exp 
\!\left (
\!
-\frac{\hat{Q}
\!+\! \hat{q}}{2} x^2\!+\!\left (\!\sqrt{\hat{q}}z \!+\! \hat{m} x^0 \!\right )
\!x \!\right )
\right )^n \right ]_{x^0,z} \right \}. 
\label{NewV}
\end{eqnarray}
Although we have assumed that $n \in \mathbb{N}$, the expressions of (\ref{NewXi}) and (\ref{NewV}) are likely to hold for $n \in \mathbb{R}$ as well. 
Therefore, the average free energy $\overline{f}$ can be evaluated by substituting these expressions into the formula $\overline{f}= \lim_{n\to 0} (\partial/\partial n) \left ((N)^{-1} \log \left [P^n \left(\vm{y} |\vm{\Phi} \right) \right]_{\vm{\Phi},\vm{y}} \right )$. 

Furthermore,  {employing the 
expressions that hold for $|n| \ll 1$, $
H^{n}(x)=
\textrm{exp}\left( n \log H(x) \right)$ $\approx 1+n\log H(x)$ and $
\log\left(1+nC(\cdot)\right)\approx nC(\cdot)$, where $C(\cdot)$ is an arbitrary function,}
 we obtain the form
\begin{eqnarray}
\lim_{n\to 0} \frac{\partial}{\partial n} \frac{1}{N} \log 
\Xi(\vm{Q})=
2 \alpha\int {\rm D}z 
H\!\left (\!\frac{m}{\sqrt{\rho q-m^2} }z \!\right )\!
 {\log }H\!\left (\!\sqrt{\frac{q}{Q-q}} z \!\right )\!. 
\end{eqnarray}
And we have 
\begin{eqnarray}
\lim_{n\to 0} \frac{\partial}{\partial n} \frac{1}{N} \log
V(\vm{Q} )
&=&\mathop{\rm extr}_{\hat{Q},\hat{q},\hat{m}} \!\Biggr\{ \int \textrm{d}x^0 P\left( x^0\right) \int\textrm{D}z\phi \left(\sqrt{\hat{q}}z+\hat{m}x^{0};\hat{Q}\right)\nonumber\\
&&+\frac{1}{2}Q\hat{Q}+\frac{1}{2}q\hat{q}-m\hat{m} \Biggl\} . 
\end{eqnarray}
Using these in the resultant expression of $\overline{f}$ gives (\ref{eq:free energy}). 

\section{Derivation of (\ref{m_mu_i_3})--(\ref{g_out_p})}
\label{GAMP_derivation}
Expanding the exponential in (\ref{m_mu_i_2}) up to the second order of $\Phi_{\mu i}(x_i -a_{i\to \mu})$ and 
performing the integration with respect to  $u_\mu$ gives
\begin{eqnarray}
m_{\mu\rightarrow i}\left(x_i\right)
&\simeq& c_{0}+c_{1}\Phi_{\mu i}\left(x_i - a_{i\rightarrow\mu}\right)
 +\frac{1}{2}c_{2}\Phi_{\mu i}^2\left(x_i -a_{i\rightarrow \mu}\right)^2\nonumber\\
&\simeq& \exp{\Biggl\{ \textrm{ln}c_0 +\frac{c_1}{c_0}\Phi_{\mu i} \left(x_i - a_{i\rightarrow\mu}\right) 
  +\frac{c_{0}c_{2}-c_{1}^2}{2c_{0}^2}\Phi_{\mu i}^2\left(x_i -a_{i\rightarrow \mu}\right)^2\Biggr\}}\nonumber\\
&\propto& \exp{\Biggl\{ -\frac{A_{\mu\rightarrow i}}{2}x_{i}^2 +B_{\mu\rightarrow i}x_i \Biggr\}},
\label{m_mu_i_3_appendix}
\end{eqnarray}
where
\begin{eqnarray}
c_{0}\equiv\int\textrm{d}u_{\mu} P(y_{\mu}|u_{\mu})
\textrm{exp}\left(-\frac{(u_{\mu}-\omega_{\mu})^2}{2V}\right) ,\label{c0}\\
c_{1}\equiv\int\textrm{d}u_{\mu} P(y_{\mu}|u_{\mu})\left(\frac{u_{\mu}-\omega_{\mu}}{V}\right)
\textrm{exp}\left(-\frac{(u_{\mu}-\omega_{\mu})^2}{2V}\right), \label{c1}\\
c_{2}\equiv\int\textrm{d}u_{\mu} P(y_{\mu}|u_{\mu})
\left(\left(\frac{u_{\mu}-\omega_{\mu}}{V}\right)^2-\frac{1}{V}\right)
\textrm{exp}\left(-\frac{(u_{\mu}-\omega_{\mu})^2}{2V}\right),\label{c2}
\end{eqnarray}
and 
\begin{eqnarray}
A_{\mu\rightarrow i}=\frac{c_1^2-c_0c_2}{c_0^2}\Phi_{\mu i}^2,\label{A_appendix}\\
B_{\mu\rightarrow i}=\frac{c_1}{c_0}\Phi_{\mu i}+\frac{c_1^2-c_0c_2}{c_0^2}\Phi_{\mu i}^2 a_{i\rightarrow \mu}.\label{B_appendix}
\end{eqnarray}
Equations (\ref{c1}) and (\ref{c2}) imply that $c_1$ and $c_2$ can be expressed as $c_1=\partial c_0/\partial \omega_\mu$ and $c_2=\partial^2c_0/\partial \omega_\mu^2$, 
respectively. Inserting this into (\ref{A_appendix}) and (\ref{B_appendix}), we obtain (\ref{m_mu_i_3})--(\ref{g_out_p}).

\section{Asymptotic form of $\mathrm{MSE^{Bayes}}$}
\label{app3}
The behavior as $m\to \rho$ and $\hat{m} \to \infty$ is obtained as $\alpha \to \infty$.  
This implies that,  {for Gauss-Bernoulli distribution}, {equations}~(\ref{m_result}) and (\ref{mhat_result}) can be evaluated as
\begin{eqnarray}
m &=& \int {\rm D}t \frac{\rho^2 (1+\hat{m})^{-1} e^{\frac{\hat{m}}{1+\hat{m}}t^2} \frac{\hat{m}}{(1+\hat{m})^2} t^2 }{1-\rho + 
\rho (1+\hat{m})^{-1/2} e^{\frac{\hat{m}}{2(1+\hat{m})}t^2}} \cr
&=&  \frac{\rho^2\hat{m}}{(1+\hat{m})}
\int {\rm D} z z^2 \left [ (1-\rho) (1+\hat{m})^{1/2} e^{-\frac{\hat{m}}{2} z^2}+\rho \right ]^{-1} \cr
&\simeq& \rho (1-\hat{m}^{-1})
\label{app3m}
\end{eqnarray}
and 
\begin{eqnarray}
\hat{m} &=& \frac{2 \alpha}{\rho-m} \int {\rm D} t \frac{e^{-\frac{m}{\rho-m}t^2}/(2\pi)}{H\left (\sqrt{\frac{m}{\rho-m}} t \right ) } 
=\frac{2\alpha}{\sqrt{m(\rho-m)}} \int \frac{{\rm d}z}{(2 \pi)^{3/2}} \frac{e^{-\frac{\rho+m}{2m} z^2}}{H(z)} \cr
&\simeq & \frac{2 C \alpha }{\sqrt{m(\rho-m)}}, 
\label{app3mhat}
\end{eqnarray}
respectively. Here,  the integration variables have been changed to $(1+\hat{m})^{-1/2}t = z$ and $\sqrt{m/(\rho-m)} t = z$ 
in (\ref{app3m}) and (\ref{app3mhat}), respectively, and we set $C \equiv \int {\rm d}z (2 \pi)^{-3/2} e^{-z^2}/H(z) = 0.3603\ldots$. 
Equations (\ref{app3m}) and (\ref{app3mhat}) yield an asymptotic expression for $m$:  
\begin{eqnarray}
m\simeq  \rho\left (1-\left (\frac{\rho}{2 C\alpha} \right )^2 \right ). 
\end{eqnarray}
Inserting this into (\ref{MSE}) gives (\ref{asymptotics}). 

The performance when the positions of non-zero entries are known can be evaluated by setting $\rho =1$ and replacing 
$\alpha $ with $\alpha/\rho$ in (\ref{m_result}) and (\ref{mhat_result}) as the dimensionality of $\vm{x}$ is reduced from 
$N$ to $N\rho$. This reproduces (\ref{asymptotics}) in the asymptotic region
of $\alpha \gg 1$.

\section{Asymptotic form of $\mathrm{MSE}^{l_1}$}
\label{app4}
The saddle-point equations of the $l_1$-norm minimization approach 
{under a normalization constraint of $|\vm{x}|^2=N$}
are as follows \cite{YingKaba1bitCS2013}:
\begin{eqnarray}
\hat{q}\!&=&\! \frac{\alpha}{\pi \chi^2} \!\left(\!\arctan \!\left(\!\frac{\sqrt{\rho\!-\!m^2}}{m}\right)\!-\!\frac{m}{\rho}\sqrt{\rho\!-\!m^2}\right), \label{qh_l1}\\
\hat{m}&=&\frac{\alpha}{\pi\chi\rho}\sqrt{\rho-m^2}\label{mh_l1}, \\
\hat{Q}^{2}\!&=&\! 2\left\{ \left(1\!-\!\rho\right)\left[\left(\hat{q}+1\right)H\left(\frac{1}{\sqrt{\hat{q}}}\right)-\sqrt{\frac{\hat{q}}{2\pi}}e^{-\frac{1}{2\hat{q}}} \right] \right. \nonumber\\
          && +\rho\left[\left(\hat{q}+\hat{m}^2+1\right)H\left(\frac{1}{\sqrt{\hat{q}+\hat{m}^2}}\right) \right . \nonumber\\
          && \left . \left . -\sqrt{\frac{\hat{q}+\hat{m}^2}{2\pi}}e^{-\frac{1}{2\left(\hat{q}+\hat{m}^2\right)}}\right]\right\}\label{Qh2_l1}, \\
\chi\!&=&\! \frac{2}{\hat{Q}}\!\left[\!\left(\!1\!-\!\rho\right)H\left(\frac{1}{\sqrt{\hat{q}}}\right) 
    \!+\!\rho H\!\left(\!\frac{1}{\sqrt{\hat{q}\!+\!\hat{m}^2}}\!\right)\!\right]\label{chi_l1}, \\
m\!&=&\!\frac{2\rho\hat{m}}{\hat{Q}}H\left(\frac{1}{\sqrt{\hat{q}+\hat{m}^2}}\right)\label{m_l1}. 
\end{eqnarray}

The behavior as $m\to \sqrt{\rho}$ and $\hat{m} \to \infty$ is obtained as $\alpha \to \infty$.  
This implies that~(\ref{Qh2_l1})  can be evaluated as
 \begin{eqnarray}
\hat{Q} \!&\simeq&\! \left( \rho\hat{m}^2\!-\!\frac{4\hat{m}}{\sqrt{2\pi}}\!+\!
B(\hat{q},\rho) \right)^{1/2}\nonumber\\
\!&\simeq&\! \sqrt{\rho}\hat{m}\left[ 1-\frac{2}{\sqrt{2\pi}\hat{m}}+\left (\frac{B(\hat{q},\rho)}{2\rho}-\frac{3}{\pi} \right )\right],
\label{Qh_l1_asym}
\end{eqnarray}
where $B(\hat{q},\rho)\equiv \rho\left(\hat{q}\!+\!1\right)\!+\!
2\left(1\!-\!\rho\right)\left[\left(\hat{q}+1\right)H\left(\frac{1}{\sqrt{\hat{q}}}\right)-\sqrt{\frac{\hat{q}}{2\pi}}e^{-\frac{1}{2\hat{q}}} \right]$. 
Inserting~(\ref{Qh_l1_asym}) into~(\ref{m_l1}), we obtain
\begin{eqnarray}
m\simeq\sqrt{\rho}\left(1-\delta\right), 
\label{m_l1_asym}
\end{eqnarray}
where 
\begin{eqnarray}
\delta\equiv \left (\frac{B(\hat{q},\rho)}{2\rho}-\frac{1}{\pi} \right )/\hat{m}^2=\pi^2\left[2(1-\rho)H \left (1/\sqrt{\hat{q}} \right )+\rho\right]^2/(2\alpha^2).
\label{delta_l1}
\end{eqnarray}
Inserting (\ref{mh_l1}), (\ref{Qh_l1_asym}), (\ref{m_l1_asym}), and
$\chi\simeq\left[2(1-\rho)H(1/\sqrt{\hat{q}})\right]/\hat{Q}$ into 
(\ref{qh_l1}) yields a closed equation with respect to $\hat{q}$: 
\begin{eqnarray}
\hat{q}\simeq\frac{2}{3}\left(B(\hat{q},\rho) -\frac{2\rho}{\pi}\right)\left[2(1-\rho)H(1/\sqrt{\hat{q}})+\rho\right]^{-1}. 
\label{qh_l1_asym}
\end{eqnarray}
This determines the value of $\hat{q}$ for $\alpha \to \infty$, $\hat{q}_{l_1}^\infty(\rho)$. 
Combining (\ref{delta_l1}) and 
\begin{eqnarray}
{\rm MSE}^{l_1}
=2\left (1-\frac{m}{\sqrt{\rho}} \right )
\simeq2\delta
\end{eqnarray}
gives (\ref{MSE_L1}) in the asymptotic region of $\alpha\gg1$.

\section*{References}

\end{document}